\documentclass[11pt,a4paper]{article}
\usepackage{fullpage}
\usepackage{xspace}
\usepackage{amsmath,upgreek}
\usepackage{amsfonts}
\usepackage{amssymb}
\newtheorem{theorem}{Theorem}

\newtheorem{lemma}[theorem]{Lemma}

\newcommand{\Xomit}[1]{ }
\newenvironment{proof}[1][Proof]{\textbf{#1.} }{\ \rule{0.5em}{0.5em}}

\newcommand{\eps}{\upvarepsilon}

\setlength{\textwidth}{6.5in} \setlength{\textheight}{9.2in}
\setlength{\topmargin}{0.15in} \setlength{\headheight}{0in}
\setlength{\headsep}{0.35in} \setlength{\parindent}{1.5em}
\setlength{\oddsidemargin}{-0.03in}
\setlength{\evensidemargin}{-0.03in}

\begin{document}

\title{Lower bounds for several online variants of bin packing\footnote{Gy. D\'{o}sa was supported by VKSZ\_12-1-2013-0088 ``Development of cloud based
smart IT solutions by IBM Hungary in cooperation with the
University of Pannonia'' and by National Research, Development and
Innovation Office -- NKFIH under the grant SNN 116095. L. Epstein
and A. Levin were partially supported by a grant from GIF - the
German-Israeli Foundation for Scientific Research and Development
(grant number I-1366-407.6/2016).}}

\date{}

\author{J\'anos Balogh \thanks{Department of Applied Informatics, Gyula Juh\'asz Faculty of Education,
     University of Szeged, Hungary. \texttt{balogh@jgypk.u-szeged.hu}} \and J\'ozsef B\'ek\'esi \thanks{Department of Applied Informatics, Gyula Juh\'asz Faculty of Education,
     University of Szeged, Hungary. \texttt{bekesi@jgypk.u-szeged.hu}} \and Gy\"{o}rgy
D\'{o}sa\thanks{Department of Mathematics, University of Pannonia,
Veszprem, Hungary, \texttt{dosagy@almos.vein.hu}.} \and Leah
Epstein\thanks{ Department of Mathematics, University of Haifa,
Haifa, Israel. \texttt{lea@math.haifa.ac.il}. } \and Asaf
Levin\thanks{Faculty of Industrial Engineering and Management, The
Technion, Haifa, Israel. \texttt{levinas@ie.technion.ac.il.}}}

\vspace{-0.5cm}

\maketitle

\begin{abstract}
We consider several previously studied online variants of bin
packing and prove new and improved lower bounds on the asymptotic
competitive ratios for them. For that, we use a method of fully
adaptive constructions. In particular, we improve the lower bound
for the asymptotic competitive ratio of online square packing
significantly, raising it from roughly $1.68$ to above $1.75$.
\end{abstract}

\section{Introduction}\label{sec:intro}

In bin packing problems, there is an input consisting of a set of
items, and the goal is to partition it into a minimum number of
subsets called bins, under certain conditions and constraints. In
the classic variant \cite{Ullman71,J74,JDUGG74,Yao80A,Seiden02J},
items have one-dimensional rational numbers in $(0,1]$, called
sizes, associated with them, and the total size of items of one
bin cannot exceed $1$. In online variants items are presented as a
sequence and the partition is created throughout this process in
the sense that any new item should be assigned to a bin before any
information regarding the next item is provided. The conditions on
the partition or packing remain as in the offline problem where
the items are all given at once as a set. Using an algorithm $A$
to partition the items into subsets, which is also seen as a
process of packing items into bins, the number of partitions or
bins used for the packing is defined to be the cost of $A$.

Algorithms for bin packing problems are normally studied using the
asymptotic approximation ratio, also called asymptotic competitive
ratio for the case of online algorithms (and we will use this last
term). For an algorithm $A$ and an input $I$, let $A(I)$ denote
the number of bins used by $A$ for $I$, that is, the cost of $A$
for $I$. Let $OPT(I)$ denote the number of bins that an optimal
solution uses for $I$, that is, the cost of an optimal (offline)
algorithm $OPT$ for $I$. Consider the set of inputs $J_Q$ of all
inputs for which the number of bins used by $OPT$ is $Q$. For the
problems studied here (and non-empty inputs for them), $Q$ will be
a positive integer. Let $c(Q)=\max_{I \in J_Q} A(I)$ (where for
reasonable algorithms this value is finite), and let $R_A =
\limsup_{Q \rightarrow \infty} \frac{c(Q)}Q$. The absolute
competitive ratio of $A$ is defined by $\sup_{I}
\frac{A(I)}{OPT(I)}$, that is, this is the supremum ratio between
the cost of $A$ and the optimal cost, over all inputs, and the
asymptotic competitive ratio is the superior limit of the absolute
competitive ratios for fixed values of $Q=OPT(I)$ when $Q$ grows
to infinity. Since the standard measures for online bin packing
problems (and offline bin packing problems, respectively), are the
asymptotic competitive ratio (and the asymptotic approximation
ratio), we also use the terms {\it competitive ratio} (and {\it
approximation ratio}) for them, and always use the word {\it
absolute} when we discuss the absolute measures. To prove lower
bounds on the (asymptotic) competitive ratio one can use inputs
where the optimal cost is arbitrarily large, and we use this
method.  The study of lower bounds on the competitive ratio for a
given problem characterizes the extent to which the performance of
the system deteriorates due to lack of information regarding the
future input items.

Here, we study three versions of the online bin packing problem,
providing new lower bounds on the competitive ratio for them.
Previous constructions used for proving such lower bounds were
often inputs where items arrive in batches, such that the items of
one batch all have the exact same size (and the input may stop
after a certain batch or it can continue to another one). In the
known lower bounds for classic bin packing, it is even known what
the next batches will be, if they are indeed presented
\cite{Liang80,Vliet92,BBG}. While it may be obvious that adaptive
inputs where the properties of the next item are based on the
packing of previous items are harder for an algorithm to deal
with, it was not known until recently how to use this idea for
designing lower bounds, except for special cases
\cite{Blitz,BCKK04,FK13}. In cardinality constrained bin packing
\cite{KSS75,KP99,Epstein05,BCKK04,BDE}, items are one-dimensional,
a fixed integer $t\geq 2$ is given, and the two requirements for a
packed bin are that its total size of items is at most $1$, and
that it contains at most $t$ items. The special case analyzed in
the past \cite{Blitz,BCKK04,FK13} is $t=2$, which can also be seen
as a matching problem, as every bin can contain at most two items.
In \cite{BBDEL} we showed that the overall competitive ratio
(supremum over all values of $t$) is $2$ (an upper bound was known
prior to that work \cite{BCKK04,BDE}), and provided improved lower
bounds for relatively small values of $t$. For standard bin
packing, the best known lower bound on the competitive ratio is
$1.5403$ \cite{Vliet92,BBG} and the best upper bound is $1.57829$
\cite{BBDEL_ub}.

Another lower bound presented in \cite{BBDEL} is for the
competitive ratio of vector packing in at least two dimensions.
For an integer dimension $d \geq 2$, the items have
$d$-dimensional vectors  associated with them, whose components
are rational numbers in $[0,1]$ (none of which are all-zero
vectors), and bins are all-one vectors of dimension $d$. A subset
of items can be packed into a bin if taking no component exceeds
$1$ in their vector sum. This generalizes cardinality constrained
bin packing, and we showed a lower bound of $2.03731129$ on the
competitive ratio of the online variant for any $d \geq 2$ (prior
to that work, no lower bound strictly above $2$ for a constant
dimension was known).

Our main goal here is to exhibit how to exploit adaptive
constructions with some connection to those used in \cite{BBDEL}
in order to obtain lower bounds for other variants. We focus on
the following three variants.  In all three variants of online bin
packing which we study, the input consists of rational numbers in
$(0,1]$, however there is additional information received with the
input in some of the cases and the input is interpreted in
different ways. Two of the problems are one-dimensional and the
input numbers are sizes of items. The third variant is
two-dimensional, and the numbers are side lengths of squares. In
our first variant called {\it bin packing with known optimal
cost}, the cost of an optimal (offline) solution is given in
advance, that is, it is known how many bins are required for
packing the input. This problem is also called {\sc K-O}
(known-OPT). It is currently hard to find an appropriate way to
use this additional piece of information for algorithm design, but
in all lower bounds known for standard online bin packing
\cite{Vliet92,BBG} the property that the optimal cost is different
for different inputs is crucial for achieving the result. For {\sc
K-O}, a lower bound of 1.30556 on the competitive ratio was
presented \cite{EL08x} and later improved to 1.32312 \cite{BB13}.
We show a new lower bound of $\frac {87}{62}\approx 1.4032258$ on
the competitive ratio, improving the previous result
significantly. This problem is related to the field of semi-online
algorithms and to the so-called model of {\it online algorithms
with advice} \cite{BKLLO,ADKRR}, where the online algorithm is
provided with some (preferably very small) pieces of information
regarding the input.

In the square packing ({\sc SP}) problem, the goal is to assign an
input set of squares whose sides are rational numbers in $(0,1]$
into bins that are unit squares in a non-overlapping and
axis-parallel way, so as to minimize the number of non-empty bins.
We use the standard definition of this packing problem, where two
squares do not overlap if their interiors do not overlap (but they
may have common points on the boundaries of the squares). The
offline variant is well-studied \cite{BCKS06,EL13}. The history of
lower bounds on the competitive ratio of online algorithms for
this problem is as follows. Several such lower bounds were proved
for the online version of {\sc SP}, starting with a simple
construction yielding a lower bound of $\frac 43$ on the
competitive ratio by Coppersmith and Raghavan \cite{CopRag89}, and
then there were several improvements
\cite{SeiSte03,EpsSte05,HvScube}, all showing bounds above $1.6$.
In 2016 a copy of the thesis of Blitz \cite{Blitz} from 1996 was
found by the authors of \cite{HvScube}. This thesis contains a
number of lower bounds for bin packing problems, including a lower
bound of 1.680783 on the competitive ratio of online algorithms
for {\sc SP}. The result of Blitz \cite{Blitz} is now the previous
best lower bound on the competitive ratio for the problem (prior
to our work), and it is higher than the lower bounds of
\cite{SeiSte03,EpsSte05,HvScube}. Here, we show a much higher
lower bound, larger than $1.7515445$, on the competitive ratio of
this problem.

Finally, we consider class constrained bin packing ({\sc CLCBP})
\cite{ST01,ST04,XM06,EIL10}. In this one-dimensional variant every
item has a size and a color, and for a given parameter $t \geq 1$,
any bin can receive items of at most $t$ different colors (of
total size at most $1$), while the number of items of each color
can be arbitrary. This problem generalizes standard bin packing,
as for any input of standard bin packing, defining a common color
to all items results in an instance of {\sc CLCBP} for any $t$. It
also generalizes bin packing with cardinality constraints, though
here to obtain an instance of {\sc CLCBP} one should assign
distinct colors to all items. We provide improved lower bounds for
$t=2,3$. For $t=2$, the previous known lower bound was $1.5652$
\cite{EIL10}. For $t=3$, the previous lower bound was $\frac 53
\approx 1.6667$ \cite{ST04}. This last result was proved even for
the special case with equal size items. Interestingly, it has
elements of adaptivity, but with respect to colors (as all items
have identical sizes), and the input moves to presenting items of
a new color once the algorithm performs a certain action. We show
that the competitive ratio of any online algorithm for {\sc CLCBP}
with $t=2$ is at least $1.717668$, and that the competitive ratio
of any online algorithm for {\sc CLCBP} with $t=3$ is at least
$1.808142$.

The drawback of previous results for all those problems is that
while the exact input was not known in advance, the set of sizes
used for it was determined prior to the action of the algorithm.
We show here that our methods for proving lower bounds can be
combined with a number of other approaches to result in improved
lower bounds for a variety of bin packing problems. We use the
following theorem proved in \cite{BBDEL} (see the construction in
Section 3.1 and Corollary 3).

\begin{theorem}\label{cnstrct}

(i) Let $N \geq 1$ and $k \geq 2$ be large positive integers.
Assume that we are given an arbitrary deterministic online
algorithm for a variant of bin packing and a condition $C_1$ on
the possible behavior of an online algorithm for one item (on the
way that the item is packed). An adversary is able to construct a
sequence of values $a_i$ ($1 \leq i \leq N$) such that for any
$i$, $a_i \in \left( k^{- 2^{N+3}}, k^{-2^{N+2}} \right)$, and in
particular $a_i \in \left(0,\frac{1}{k^4}\right)$. For any item
$i_1$ satisfying $C_1$ and any item $i_2$ not satisfying $C_1$, it
holds that $\frac{a_{i_2}}{a_{i_1}} > k$. Specifically, there are
values $\beta$ and $\gamma$ such that for any item $i_1$
satisfying $C_1$,
 and any item $i_2$ not satisfying $C_1$, it holds
that $a_{i_1}<\gamma <a_{i_2}$ and $\frac{a_{i_2}}{a_{i_1}}>\beta$.

(ii) If another condition $C'$ is given for stopping the input (it
can be a condition on the packing or on the constructed input), it
is possible to construct a sequence $a_i$ consisting of $N$ items
such that $C'$ never holds, or a sequence of $N'<N$ items, such
that $C'$ holds after $N'$ items were introduced (but not
earlier), and where the sequence satisfies the requirements above.
\end{theorem}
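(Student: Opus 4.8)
The plan is to construct the sequence $a_1, a_2, \ldots$ one item at a time, maintaining at every step an interval $I_i$ from which $a_i$ will eventually be chosen, where the length of $I_i$ shrinks geometrically and stays well separated (by a factor more than $k$) from the intervals used for items of the "other type" with respect to $C_1$. The starting interval is $I_0 = \left(k^{-2^{N+3}}, k^{-2^{N+2}}\right)$, whose endpoints differ by the huge multiplicative factor $k^{2^{N+2}}$, giving us a large budget to work with. For each item $i$ in turn, we split the current working interval into sub-intervals; since the online algorithm is deterministic, once we fix a tentative value $a_i$ the algorithm's packing decision for item $i$ is determined, so we can test whether it satisfies $C_1$ or not. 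Based on this we commit item $i$ either to a "low" sub-interval (those $a_{i_1}$ satisfying $C_1$) or to a "high" sub-interval (those $a_{i_2}$ not satisfying $C_1$), chosen so that any low value lies below a common threshold $\gamma$ and any high value lies above it, with ratio exceeding $\beta > k$.

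The technical heart is the bookkeeping that makes the separation work for \emph{all} pairs simultaneously, not just consecutive ones. First I would set up the recursion: entering step $i$ with a working interval of the form $(L, U)$ with $U/L \geq k^{2^{N+2-i+1}}$ (say), partition it into a bottom third, a middle buffer, and a top third on a multiplicative scale, i.e. choose thresholds so that the bottom part is $(L, L\cdot(U/L)^{1/3})$ and the top part is $(U\cdot(L/U)^{1/3}, U)$; the ratio between any top value and any bottom value is then at least $(U/L)^{1/3}$, which by the inductive bound on $U/L$ exceeds $k$ as long as $2^{N+2-i+1}/3 \geq 1$, i.e. for all $i \leq N$ with room to spare. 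Having decided which third to place $a_i$ in according to whether the algorithm's action on a probe value satisfies $C_1$, I then recurse \emph{inside} that third with the new working interval, so that the interval shrinks but its log-width is still at least one third of the previous one — and after $N$ steps the accumulated shrinkage is only by an exponent factor $3^{-N}$ relative to $2^{N+2}$, which I would check leaves $U/L$ comfortably above $k$ throughout. The global separation $a_{i_1} < \gamma < a_{i_2}$ then follows because every "low" item ever chosen lies in the union of bottom-thirds, all of which lie below the first middle buffer's lower endpoint, which we fix once and for all as $\gamma$; similarly every "high" item lies above $\gamma$, and the ratio bound $a_{i_2}/a_{i_1} > \beta$ comes from comparing across the first buffer. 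The main obstacle is precisely this: arranging the partition so that the single threshold $\gamma$ and single ratio $\beta$ work against \emph{every} future item regardless of how the adaptive choices branch — this forces us to "reserve" the separating buffer at the very first step (or at a fixed early step) rather than re-deriving it at each level, and to verify the exponent arithmetic $2^{N+3}$ versus $2^{N+2}$ leaves enough slack for $N$ rounds of三-fold log-shrinkage while keeping everything inside $(0, k^{-4})$.

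For part (ii), I would run the same construction but after committing each $a_i$ also evaluate the stopping condition $C'$ on the packing-and-input produced so far (again determined, since the algorithm is deterministic and all earlier values are fixed). If $C'$ is triggered for the first time after item $N' < N$, we halt and output $a_1, \ldots, a_{N'}$; the separation and range properties are inherited verbatim since we used a prefix of the same nested construction. If $C'$ is never triggered through item $N$, we output the full length-$N$ sequence. The only subtlety is that we must probe $a_i$ and observe the algorithm's response before we can check $C'$, but since $C'$ "can be a condition on the packing or on the constructed input," the value $a_i$ we finally commit is itself a legitimate value for which the check was performed — so testing on the committed value (rather than a discarded probe) is what we do, and there is no inconsistency. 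This completes the plan; I expect the write-up to consist mostly of carefully stating the interval recursion, the one-time reservation of $\gamma$ and $\beta$, and a short calculation verifying the exponents.
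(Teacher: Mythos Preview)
Your recursion goes in the wrong direction. You write that after deciding ``which third to place $a_i$ in'' you ``recurse inside that third''. But then if item $1$ satisfies $C_1$ and lands in the bottom third, \emph{every} subsequent item --- including any item $j$ that does not satisfy $C_1$ and is therefore required to lie above $\gamma$ --- is drawn from that same bottom third, hence below the $\gamma$ you fixed as the lower edge of the first middle buffer. You flag this as ``the main obstacle'' and propose to reserve the buffer at step~$1$, but that reservation is exactly what your recursion violates. The paper's construction (sketched immediately after the theorem statement) does the opposite: the value $a_i$ is chosen inside the current active interval, presented to the algorithm, and then the active interval is replaced by the part on the \emph{other} side of $a_i$ --- if $C_1$ holds, all future values are larger (interval becomes the part above $a_i$); if not, all future values are smaller. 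Thus every item satisfying $C_1$ sits at or below the left endpoint of the final interval, every item not satisfying $C_1$ sits at or above its right endpoint, and $\gamma$ is taken anywhere in that final interval. Starting with multiplicative width $k^{2^{N+2}}$ and halving the exponent at each of $N$ steps leaves width at least $k^{4}>k$, which gives the ratio bound.

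There is a second problem. Your probe-then-commit language (``the algorithm's action on a probe value \ldots\ based on this we commit item $i$'') suggests testing with one value and then assigning $a_i$ a possibly different value in the appropriate third. That is inadmissible online: the algorithm irrevocably packs the item it is shown, and you cannot swap its size afterward. The paper stresses this point explicitly: ``the value $a_i$ is defined \emph{before} it is known whether $C_1$ holds for the $i$th item''. A single committed value $a_i$ must serve both as the item presented and as the pivot for the interval update; this is precisely why the update must move the working interval \emph{away} from $a_i$ (a geometric binary search for $\gamma$) rather than into the third you would like $a_i$ to occupy. Your treatment of part~(ii) is fine once part~(i) is repaired along these lines.
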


Examples for the condition $C_1$ can be the following: ``the item
is packed as a second item of its bin'', ``the item is packed into
a non-empty bin'', ``the item is packed into a bin an item of size
above $\frac 12$'', etc. An example for the condition $C'$ can be
``the algorithm has at least a given number of non-empty bins''.

The construction of such inputs is based on presenting items one
by one, where there is an active (open) interval of sizes out of
which future values $a_i$ are selected. When a new item is
presented, and the algorithm packs it such that it does not
satisfy $C_1$, all future items will be smaller. If the algorithm
packs a new item such that it satisfies $C_1$, all future items
will be larger. This reduces the length of the active interval.
Thus, even though the active interval becomes shorter in every
step where a new item arrives, it always has a positive length.
One can see this as a kind of binary search on the value $\gamma$,
which will always be contained in the remaining interval (as it
remains non-empty). For example, Fujiwara and Kobayashi
\cite{FK13} used a similar approach and in their work the middle
point of the active interval is the size of the next item, and the
active interval has length that it smaller by a factor of $2$
after every step. To obtain the stronger property that items whose
sizes is at least the right endpoint of the active interval are
larger by a factor of $k$ than items no larger than the left
endpoint of the active interval, the selection of the next size is
performed by a process similar to geometrical binary search.

Note that an important feature is that the value $a_i$ is defined
{\it before} it is known whether $C_1$ holds for the $i$th item
(the item corresponding to $a_i$, that is, the item whose size is
a function of $a_i$).  We will use this theorem throughout the
paper. We study the problems in the order they were defined.

\section{Online bin packing with known optimal cost ({\sc K-O})}\label{sec:knownopt}
Here, we consider the problem {\sc K-O}, and prove a new lower
bound on the competitive ratio for it. We prove the following
theorem.

\begin{theorem}
The competitive ratio of any online algorithm for {\sc K-O} is at
least  $\frac {87}{62}\approx 1.4032258$.
\end{theorem}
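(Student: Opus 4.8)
The plan is to fix a large target value $Q$ for the optimal cost (a large multiple of $62$), announce $\OPT=Q$ to the algorithm, and then reveal the items one by one, using Theorem~\ref{cnstrct}, so that whatever the algorithm does it is forced to open at least $\frac{87}{62}Q-o(Q)$ bins; sending $Q\to\infty$ then gives the claimed bound. The input will be assembled from a constant number of ``large'' and ``medium'' sizes, taken as small perturbations of simple unit fractions so that every bin of an optimal solution has one of a few prescribed contents, together with a block of ``tiny'' items supplied by the construction of Theorem~\ref{cnstrct}. The total size of all tiny items is less than $1$, so the optimum can place all of them into a single extra bin (or sprinkle them among bins it already uses), which is why the announced value $Q$ really is the optimal cost regardless of the adversary's later choices; on the other hand, the way the algorithm is forced to commit the tiny items to bins, monitored through a suitable condition $C_1$, leaks enough information to let the adversary pick, in hindsight, which of a few completions of the input to use.

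Concretely, I would run the construction in phases. In the first phase the adversary applies Theorem~\ref{cnstrct}(ii) with $C_1$ taken to be ``the tiny item is packed into a non-empty bin'' — so that an item violating $C_1$ opens a fresh bin and, by the theorem, is larger than any item satisfying $C_1$ by a factor exceeding $k$ — and with the stopping condition $C'$ equal to ``the algorithm has at least $\mu Q$ non-empty bins'' for a constant $\mu$ to be optimized. By part~(ii) either the tiny phase runs to its full length $N$ without $C'$ ever holding, or it stops with $C'$ having just become true. In the first case the algorithm has refused to consolidate the tiny items, so it carries many bins holding only tiny mass; the adversary then appends a block of large items and a block of complementary ``filler'' items (sizes chosen as the complements of the large ones, adjusted by the accumulated tiny mass) so that $\OPT=Q$ exactly, while the algorithm cannot reuse its tiny-only bins for the large items and is pushed well past $Q$. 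In the second case the algorithm already has $\mu Q$ essentially empty bins, and the adversary likewise finishes with large items plus normalizing filler keeping $\OPT=Q$, so that the algorithm's cost exceeds $(\mu+c)Q$ for a positive constant $c$. Inside each case one still needs a short sub-analysis of how the algorithm handles the large and medium items — how many it stacks alone versus how many it merges with the residue of the tiny-bearing bins — and a correspondingly refined choice of $C_1$ for that stage.

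The step I expect to be the main obstacle is the one special to {\sc K-O}: in \emph{every} branch of the case analysis the completed input must have optimal cost \emph{exactly} $Q$. Thus for each branch the adversary must carry a supply of normalizing items that pad an optimal packing up to precisely $Q$ full bins \emph{without} also rescuing the algorithm — these items must fit only into the space the algorithm has already wasted. Making the perturbation parameter $\eps$, the ratio bounds $\beta,\gamma$ from Theorem~\ref{cnstrct}, and the multiplicities of all item types mutually consistent across branches, and then tuning the threshold $\mu$ and the branch sizes so that the per-branch lower bounds all coincide, is a finite linear program whose optimum turns out to be $\frac{87}{62}$; executing this optimization and verifying $\OPT=Q$ in each branch is where the real work lies. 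A secondary point is to confirm, using the feature that each $a_i$ is fixed before it is known whether $C_1$ holds for item $i$, that the adversary never ``cheats'' by letting later sizes depend on the optimum, so that the tiny items are legitimate online inputs.
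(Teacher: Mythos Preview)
Your outline has a genuine gap: the ``tiny items'' phase as you describe it cannot generate any tension. If the total size of all tiny items is below $1$, then the algorithm can simply place every tiny item into the same bin; your condition $C_1$ (``packed into a non-empty bin'') is then satisfied by all but the first tiny item, your stopping condition $C'$ never triggers, and the phase ends with the algorithm holding one bin of negligible load. Nothing in Theorem~\ref{cnstrct} forces the algorithm to open bins---it only controls the \emph{sizes} of the items relative to one another, conditional on how the algorithm packs them. (Your own case analysis is also internally inconsistent: if $C'=$``at least $\mu Q$ bins'' never holds, the algorithm has \emph{few} bins, not ``many bins holding only tiny mass.'')

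The paper's construction is quite different and uses items of substantial size. The first batch has $M$ items of size just above $\frac17$ and the second has $M$ items of size just above $\frac13$; Theorem~\ref{cnstrct} is applied so that an item which is the \emph{first} of its bin is slightly \emph{larger} than the threshold $\gamma$. The point is not that these items are small, but that a ``large'' $\frac17$-item (resp.\ ``large'' $\frac13$-item) can no longer be combined with a later item of size $\frac67-\gamma_1$ (resp.\ $\frac23-\gamma_2$), whereas a ``small'' one can. The adversary then has five continuations (items of size $\frac45$; items of size $\frac67-\gamma_1$; items of size $1$; items of size $0.52$; items of size $\frac23-\gamma_2$), each with item counts chosen so that the optimal cost is exactly $M$ in every branch. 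Classifying the algorithm's bins by their $(\text{\#}\frac17\text{-items},\text{\#}\frac13\text{-items})$ pattern gives a linear program in the pattern frequencies $x_{ij}$, $y_7$, $y_3$; the five ``$ALG_i/M\le R$'' constraints together with item-count and non-negativity constraints are what force $R\ge\frac{87}{62}$, via explicit dual multipliers. None of this structure---the specific sizes, the five branches, the pattern LP, or the multipliers---appears in your outline, and your claim that ``the optimum turns out to be $\frac{87}{62}$'' is unsupported.
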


Let $M$ be a large integer that is divisible by $4$ ($M$ will be
the value of the known optimal cost). We will create several
alternative inputs, such that the optimal cost will be equal to
$M$ for each one of them.

We use the following construction.  For $k=10$ and $N=M$, define
an input built using Theorem \ref{cnstrct} as follows applied
twice on different parts of the input as explained below. The
outline of our lower bound construction is as follows. The first
part of the input will consist of $M$ items of sizes slightly
above $\frac 17$ (such that some of them, those packed first into
bins, are larger than the others). Then, there are $M$ items of
sizes slightly above $\frac 13$ (where items packed into new bins
are larger than others, while those combined with items of sizes
roughly $\frac 17$ or with another item of size roughly $\frac
13$, or both, are slightly smaller). Finally, the algorithm will
be presented with a list of identical items of one of the three
sizes $1$ (exactly), or slightly above $\frac 12$, or slightly
below $\frac 23$, such that every larger item of size slightly
above $\frac 13$ cannot be packed together with such an item (of
size slightly below $\frac 23$). Additionally, after the first $M$
items arrive, it is possible that instead of the input explained
here there are items of sizes slightly below $\frac 67$, either
such that every such item can be packed with any item out of the
first $M$ items, or such that it can only be combined with the
smaller items out of the first $M$ items (due to the property that
the size of an item will be just below $\frac 67$, in both cases
it can be combined with at most one item of size just above $\frac
17$).

Next, we formally define our input sequences. Throughout this
section, let the condition $C_1$ be that the item is not packed as
a first item into a bin. The first $M$ items are defined as
follows. Using Theorem \ref{cnstrct}, we create $M$ items such
that the size of item $i$ is $\frac 17+a_i$. These items are
called $S$-items. The sizes of such items are in $(\frac 17,
0.143)$, and there is a value $\gamma_1$ such that any item whose
packing satisfies condition $C_1$ has size below $\frac 17 +
\gamma_1$ and any item whose packing does not satisfy $C_1$ has
size above $\frac 17 + \gamma_1$. The first kind of items are
called small $S$-items, and the second kind of items are called
large $S$-items.

Let $Y_7$ denote the current number of bins used by the algorithm
(after all $S$-items have arrived), and this is also the number of
large $S$-items. Two possible continuations at this point are $M$
items of sizes equal to $\frac 45$ (the first option), and $M -
\lceil \frac{Y_7}{6} \rceil$ items of sizes equal to $\frac 67 -
\gamma_1$ (the second option).

\begin{lemma}\label{M1}
In both options, an optimal solution has cost $M$.
\end{lemma}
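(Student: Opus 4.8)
The plan is to prove $\OPT = M$ for each option by establishing the two matching inequalities $\OPT \le M$ (by exhibiting a packing) and $\OPT \ge M$ (by a counting argument) separately. The numerical facts I will rely on are: a small $S$-item has size in $(\frac17, \frac17 + \gamma_1)$ and a large $S$-item has size in $(\frac17 + \gamma_1, 0.143)$, where crucially $\gamma_1 > 0$; since every $a_i < \frac1{k^4} = 10^{-4}$, any six $S$-items have total size at most $\frac67 + 6\cdot 10^{-4} < 1$ and hence fit in one bin, whereas seven $S$-items have total size exceeding $7 \cdot \frac17 = 1$ and never fit together; $\frac45 + 0.143 < 1$ and $\frac45 > \frac12$; and $(\frac67 - \gamma_1) + (\frac17 + \gamma_1) = 1$ together with $\frac67 - \gamma_1 > \frac12$.

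For the first option the $M$ additional items have size $\frac45$. Since no two of them fit in a bin, $\OPT \ge M$. Conversely, placing each of them in its own bin together with one distinct $S$-item gives a valid packing (each pair has size below $\frac45 + 0.143 < 1$) that uses all $2M$ items in exactly $M$ bins, so $\OPT \le M$.

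For the second option the $M - \lceil Y_7/6\rceil$ additional items have size $\frac67 - \gamma_1$, while there are $Y_7$ large and $M - Y_7$ small $S$-items. For the upper bound I build $M$ bins: give each item of size $\frac67 - \gamma_1$ its own bin; add to $M - Y_7$ of these bins one small $S$-item each, which is possible because $M - Y_7 \le M - \lceil Y_7/6\rceil$ and a small $S$-item together with an item of size $\frac67 - \gamma_1$ has total size strictly below $1$; and pack the $Y_7$ large $S$-items six per bin into $\lceil Y_7/6\rceil$ fresh bins. This uses $(M - \lceil Y_7/6\rceil) + \lceil Y_7/6\rceil = M$ bins and every item is placed. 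For the lower bound, the $M - \lceil Y_7/6\rceil$ items of size $\frac67 - \gamma_1 > \frac12$ occupy that many distinct bins; a large $S$-item cannot be added to any of them, since $(\frac67 - \gamma_1) + (\frac17 + \gamma_1) = 1$ already leaves no slack while a large $S$-item exceeds $\frac17 + \gamma_1$. Hence all $Y_7$ large $S$-items sit in bins containing no item of size $\frac67 - \gamma_1$, and since each bin holds at most six $S$-items, these contribute at least $\lceil Y_7/6\rceil$ more bins, disjoint from the previous group. Adding up, $\OPT \ge (M - \lceil Y_7/6\rceil) + \lceil Y_7/6\rceil = M$.

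The routine part is checking the handful of arithmetic inequalities among $\frac17$, $\frac45$, $\frac67$ and the tiny quantities $a_i$ and $\gamma_1$. The one step that needs genuine care is the lower bound for the second option: a volume argument is too weak there (the total size is only about $M - \frac{Y_7}{7}$), so one must use the structural fact that the large $S$-items are forced out of every bin carrying a $\frac67 - \gamma_1$ item, combined with the cardinality bound of six $S$-items per bin; and this in turn depends on $\gamma_1$ being strictly positive, which is precisely what Theorem~\ref{cnstrct} guarantees.
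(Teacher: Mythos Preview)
Your proof is correct and follows essentially the same approach as the paper: for each option you exhibit the identical packing (pairing $\frac45$-items with $S$-items in the first option; $\lceil Y_7/6\rceil$ bins of large $S$-items plus the remaining bins carrying one $\frac67-\gamma_1$ item and possibly one small $S$-item in the second), and you prove optimality via the same structural observation that large $S$-items cannot share a bin with a $\frac67-\gamma_1$ item while at most six $S$-items fit per bin. Your write-up is in fact a bit more explicit than the paper's about the arithmetic and about why $M-Y_7 \le M-\lceil Y_7/6\rceil$, but the argument is the same.
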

\begin{proof}
In the first option, an optimal solution has one item of size
$\frac 45$ and one item of size no larger than $0.143$ in every
bin. It is optimal as every item of size above $\frac 12$ requires
a separate bin (where it can be possibly packed with smaller
items).

In the second option, an optimal solution uses $\lceil
\frac{Y_7}{6} \rceil$ bins to pack the large $S$-items: Every bin
can contain at most six such items, as their sizes are in $(\frac
17,\frac 16)$, each remaining bin has one item of size $\frac 67 -
\gamma_1
> 0.857$, and $M-Y_7$ of them also have one item (each) of size
below $\frac 17 +\gamma_1$. This is an optimal solution as the two
larger kinds of items (those of sizes above $\frac 12$ and the
 large $S$-items) cannot be combined into the same bins, and the packing
for each of these two kinds of items is optimal.
\end{proof}

In the first case, the algorithm can use bins containing exactly
one item to pack (also) an item of size $\frac 45$, but it cannot
use any other bin again. In the second case, as every bin has
exactly one item of size above $\frac 17+\gamma_1$, the algorithm
uses an empty bin for every item of size $\frac 67 - \gamma_1$.

We explain the continuation of the input in the case where none of
the two continuations already defined is used. The next $M$ items
are defined using Theorem \ref{cnstrct}, and we create $M$ items
such that the size of the $i$th item of the current subsequence of
$M$ items is $\frac 13+a_i$ (the values $a_i$ are constructed here
again, and they are different from the values $a_i$ constructed
earlier). We call these items $T$-items. The sizes of $T$-items
are in $(\frac 13, 0.33344)$, and there is a value $\gamma_2$ such
that any item whose packing satisfies condition $C_1$ (defined in
this section) has size below $\frac 13 + \gamma_2$ and for any
item whose packing does not satisfy $C_1$, it has size above
$\frac 13 + \gamma_2$. The first kind of items are called small
$T$-items, and the second type items are called large $T$-items.

Here, there are three possible continuations. The first one is
$\frac M2$ items, all of size $1$. The second one is $M$ items,
each of size $0.52$. Let $Y_3$ denote the number of new bins
created for the $T$-items, which is also the number of large
$T$-items (so after the $T$-items are packed the algorithm uses
$Y_7+Y_3$ bins). If $Y_3 \leq \frac M2$, the third continuation is
with $\frac {3M}4$ items, each of size $\frac 23 - \gamma_2$
(where $\frac 23 - \gamma_2 > 0.66656$). Otherwise ($Y_3 > \frac
M2$), the third continuation is with $M- \lceil\frac{Y_3}2 \rceil$
items, each of size $\frac 23 - \gamma_2$. Thus, in the third
continuation, the sizes of items are the same (i.e., $\frac 23
-\gamma_2$) in both cases, and the number of items is $M-
\max\{\frac M4,\lceil\frac{Y_3}2 \rceil\}$.

\begin{lemma}\label{M2}
The optimal cost in all cases (i.e., after the packing of the
items of each possible continuation has been completed) is exactly
$M$.
\end{lemma}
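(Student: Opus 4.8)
The plan is to verify, for each of the three continuations after the $T$-items, that there is a feasible packing into exactly $M$ bins and that no packing can do better. Lower bound $M$ is easy in every case: each continuation introduces at least $M$ items (or pieces) of size strictly above $\frac12$, so no two of them share a bin, forcing at least $M$ bins except in the first continuation where there are only $\frac M2$ unit items but each needs a bin by itself together with the fact that the $S$- and $T$-items together cannot all be packed in fewer than $M$ bins; I will treat that case separately. So the work is to exhibit the matching feasible solution.

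\emph{First continuation ($\frac M2$ items of size $1$).} Here the $\frac M2$ unit items occupy $\frac M2$ bins by themselves. It remains to pack the $M$ $S$-items (sizes in $(\frac17,0.143)$) and the $M$ $T$-items (sizes in $(\frac13,0.33344)$) into $\frac M2$ further bins. Since $2\cdot 0.33344 + 3\cdot 0.143 < 1$, a bin can hold two $T$-items and three $S$-items; with $M$ of each, $\frac M2$ such bins suffice exactly, for a total of $M$. For the lower bound, observe the $2M$ small items have total size exceeding $M(\frac17+\frac13) = \frac{10M}{21}$, but more simply: the $\frac M2$ unit items need $\frac M2$ bins, and the $S$- and $T$-items cannot go in those bins, and they themselves need at least $\lceil M/2\rceil$ bins since a bin holds at most two $T$-items — giving $M$ total.

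\emph{Second continuation ($M$ items of size $0.52$).} The $M$ items of size $0.52$ sit one per bin. Into each such bin we may add items of total size $<0.48$: that is one $T$-item ($<0.33344$) plus one $S$-item ($<0.143$), since $0.33344+0.143<0.48$. With $M$ of each, all $S$- and $T$-items fit into these $M$ bins, for cost exactly $M$; and $M$ is forced since there are $M$ items above $\frac12$.

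\emph{Third continuation ($M-\max\{\frac M4,\lceil Y_3/2\rceil\}$ items of size $\frac23-\gamma_2$).} This is the main obstacle, because the count of big items depends on the algorithm's behavior through $Y_3$, and the optimal packing must absorb the leftover $T$-items that are not paired with a big item. Recall a large $T$-item has size $>\frac13+\gamma_2$, so it cannot join a $\frac23-\gamma_2$ item; a small $T$-item (size $<\frac13+\gamma_2$) can, since $(\frac13+\gamma_2)+(\frac23-\gamma_2)=1$. The plan: put each of the $M-\max\{\frac M4,\lceil Y_3/2\rceil\}$ big items in its own bin, pairing it with one small $T$-item. There are $M-Y_3$ small $T$-items total, and $M-Y_3 \ge M-\max\{\frac M4,\lceil Y_3/2\rceil\}$ holds in both regimes (when $Y_3\le \frac M2$ it reads $M-Y_3\ge \frac{3M}4$, i.e. $Y_3\le \frac M4$? — careful: one checks $M-\lceil Y_3/2\rceil \le M-Y_3$ fails, so instead the remaining $\lceil Y_3/2\rceil$ or $\frac M4$ bins must hold the large $T$-items, two per bin since $2(\frac13+\gamma_2)<1$, which needs $\lceil Y_3/2\rceil$ bins; and the remaining small $T$-items, six per bin as $6\cdot 0.143<1$... ). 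The accounting I will carry out: use $M-\max\{\frac M4,\lceil Y_3/2\rceil\}$ bins for (big item + small $T$-item), and the remaining $\max\{\frac M4,\lceil Y_3/2\rceil\}$ bins for the $Y_3$ large $T$-items (packed two per bin, needing $\lceil Y_3/2\rceil$ bins) plus all remaining small $T$-items and all $S$-items (which are tiny, in $(\frac17,0.143)\cup(\frac13,0.33344)$, and fit in the slack), verifying the capacity bound $2(\tfrac13+\gamma_2)<1$ and that enough room remains in these $\max\{\frac M4,\lceil Y_3/2\rceil\}$ bins for the $O(M)$ tiny items using that each leftover slot after two large $T$-items still admits an $S$-item, while when $Y_3<\frac M2$ the extra bins beyond $\lceil Y_3/2\rceil$ are nearly empty and absorb the rest easily. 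The matching lower bound of $M$ follows because there are $M$ items (the big ones together with the large $T$-items, all of size $>\frac12$, with $Y_3$ large $T$-items and $M-\max\{\frac M4,\lceil Y_3/2\rceil\} \ge \frac M2 \ge \lceil Y_3/2\rceil$... ) — I will instead argue $M$ directly from the total-size or item-count bound that makes the claim clean, namely that the big items plus the $T$-items need at least $M$ bins since a bin holds at most one big item and at most two $T$-items. The delicate part throughout is keeping the floors/ceilings consistent and confirming no bin is overfilled once all the tiny $S$-items are distributed; I expect that to be the only real computation.
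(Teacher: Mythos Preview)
Your first two continuations are essentially correct, modulo an arithmetic slip: $2\cdot 0.33344 + 3\cdot 0.143 \approx 1.096 > 1$, so two $T$-items plus three $S$-items do \emph{not} fit; but you only need two of each ($2\cdot 0.33344 + 2\cdot 0.143 < 1$), which is exactly what $M$ of each in $M/2$ bins requires, and is the paper's packing.

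The third continuation has two genuine gaps. For the \emph{upper bound}, your plan pairs every big item (size $\tfrac23-\gamma_2$) with one small $T$-item, but there are not enough small $T$-items: there are $M-Y_3$ of them versus $M-\max\{\tfrac M4,\lceil Y_3/2\rceil\}$ big items, and the needed inequality $Y_3\le\max\{\tfrac M4,\lceil Y_3/2\rceil\}$ fails whenever $Y_3>\tfrac M4$ (e.g.\ if $Y_3=M$ there are no small $T$-items but $M/2$ big items). The missing idea is that a big item can instead be paired with \emph{two $S$-items} (since $\tfrac23-\gamma_2 + 2\cdot 0.143<1$). The paper splits the big-item bins into some carrying two $S$-items and some carrying one small $T$-item, and puts two $T$-items plus two $S$-items into each of the remaining $\max\{\tfrac M4,\lceil Y_3/2\rceil\}$ bins. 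Your alternative of dumping all $M$ $S$-items into the non-big bins also fails: a bin with two $T$-items has room for at most two $S$-items, so at most $\tfrac M4$ such bins absorb at most $\tfrac M2$ $S$-items.

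For the \emph{lower bound}, your claim that ``big items plus $T$-items need at least $M$ bins since a bin holds at most one big item and at most two $T$-items'' yields only $B+\lceil(M-B)/2\rceil$ with $B=\tfrac{3M}4$, i.e.\ about $\tfrac{7M}8$, not $M$; and large $T$-items have size below $0.334$, not above $\tfrac12$. The paper's lower bound in the case $Y_3\le\tfrac M2$ genuinely requires the $S$-items: letting $\Delta$ be the number of big-item bins that also hold a $T$-item, one checks that if $\Delta\le\tfrac M2$ the $T$-item count forces $\ge M$ bins, while if $\Delta>\tfrac M2$ then so many $S$-items are squeezed out of big-item bins that a count on the remaining bins (treating each $T$-item as two items of size in $(\tfrac17,\tfrac16]$, so at most six such items per bin) forces at least $\tfrac M4$ additional bins. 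This step is not bypassable by the simpler counting you sketch.
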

Note that it is sufficient to show that the optimal cost is at
most $M$, as in the case where it is strictly smaller than $M$, it
is possible to present items of size $1$ until the optimal cost is
exactly $M$, while the cost of the algorithm does not decrease. We
prove that the value is exactly $M$ to stress the property that
one cannot prove a better lower bound using the same kind of
input.

\medskip

\noindent\begin{proof} For the first continuation, an optimal
solution packs $\frac M2$ bins, each with two $S$-items and two
$T$-items, and another $\frac M2$ bins, each with one item of size
$1$. This solution is optimal as every item of size $1$ has to be
packed alone into a bin, and no bin can contain more than two
items of sizes above $\frac 13$.

For the second continuation, an optimal solution packs $M$ bins,
each with one item of size $0.52$, one $T$-item and one $S$-item.
This solution is optimal as no bin can contain more than one item
of size above $\frac 12$.

For the third continuation, the two options for optimal solutions
are as follows. In the case $Y_3 \leq \frac M2$, there are
$\frac{M}4$ bins, each with two $T$-items and two $S$-items. All
large $T$-items will be packed into these bins (which is possible
as there are $\frac M2 \geq Y_3$ $T$-items packed into those
bins). There are also $\frac M2 -Y_3$ small $T$-items packed into
these bins. Each of the remaining bins contains one item of size
$\frac 23 - \gamma_2$, where $\frac M4$ of those bins also contain
two $S$-items (which is possible as the total size will be below
$0.143\cdot 2+ \frac 23<1$), and each of the remaining $\frac M2$
bins has one small $T$-item (this is possible as the size of each
small $T$-item is below $\frac 13 + \gamma_2$).

In the case $Y_3 > \frac M2$, there are $\lceil \frac{Y_3}2
\rceil$ bins with two $S$-items and two large $T$-items (at most
one bin may contain a smaller number of large $T$-items). All
large $T$-items are packed into these bins, and no small $T$-items
are packed into these bins. The remaining bins all contain one
item (each) of size $\frac 23 - \gamma_2$, where $\frac M2 -
\lceil \frac{Y_3}2 \rceil$ of those bins also contain two
$S$-items, and $M- Y_3$ of those bins (not containing $S$-items)
also contain one small $T$-item (this is possible as $\lceil
\frac{Y_3}2 \rceil+\frac M2 - \lceil \frac{Y_3}2 \rceil +M-Y_3
\leq M$).

The solution for the second case (i.e., for the case $Y_3 > \frac
M2$) is optimal as separate bins are needed for items of size
$\frac 23 - \gamma_2$ and large $T$-items, and the solution
obtained for each kind is optimal.

Thus, it remains to prove that in the first case (i.e., in the
case $Y_3 \leq \frac M2$), the optimal cost is $M$. Observe that
we showed a feasible solution of cost $M$, so we need to show that
the optimal cost is at least $M$. In this case every bin with an
item of size $\frac 23 - \gamma_2$ can receive either two
$S$-items or one small $T$-item. Consider an optimal solution and
let $\Delta \geq 0$ be the number of items of size $\frac 23 -
\gamma_2$ packed with a $T$-item. The remaining $(M-\Delta)$
$T$-items are packed at most two in each bin, so if $\Delta \leq
\frac M2$, we are done as there are at least $\frac{3M}4 + \frac
{M-\Delta}2 \geq M$ bins. Otherwise, $\Delta \geq \frac{M}2+1$, at
most $2(\frac{3M}4-\Delta)$ $S$-items
 are packed with items of size $\frac 23 - \gamma_2$, and
$M-2(\frac{3M}4-\Delta)=2\Delta-\frac M2$ $S$-items remain to be
packed with $(M-\Delta)$ $T$-items. Even replacing each $T$-item
with two items of size in $(\frac 17,\frac 16]$ (virtually, for
the sake of proof), we have to pack $2(M-\Delta)+2\Delta-\frac
M2=\frac{3M}2$ items where a bin can contain at most six items, so
at least $\frac M4$ bins are needed, for a total of
$\frac{3M}4+\frac M4=M$ bins.
\end{proof}

This completes the description of the input where we showed that
in each case the optimal cost is exactly $M$.  Next, we consider
the behavior of the algorithm. Consider the kinds of bins the
algorithm may have after all $T$-items have arrived. The $T$-items
do not necessarily arrive, but we will deduce the numbers of
different kinds of bins the algorithm has after the $S$-items have
arrived from the numbers of bins assuming that the $T$-items have
arrived. This is an approach similar to that used in
\cite{Vliet92}, where numbers of bins packed according to certain
patterns (subsets of items that can be packed into one bin) at the
end of the input are considered, and based on them, the number of
bins already opened at each step of the input are counted. More
precisely, if the input consists of batches of identical (or
similar) items, given the contents of a bin it is clear when it is
opened and at what times (after arrival of sub-inputs) it should
be counted towards the cost of the algorithm.

A bin with no $T$-items can receive an item of size $0.52$ if it
has at most three $S$-items and it can receive an item of size
$\frac 23 - \gamma_2$ if it has at most two $S$-items. The only
case where a bin with at least one $S$-item and at least one
$T$-item can receive another item (out of a continuation of the
input) is the case that a bin has one of each of these types of
items, and it will receive an item of size $0.52$.

Let $X_{60}$ denote the number of bins with four or five or six
$S$-items and no $T$-items. Such a bin cannot receive any further
items in addition to its $S$-items. Let $X_{30}$ denote the number
of bins with three $S$-items and no $T$-items. Such a bin can
receive an item of size $0.52$ (but not a larger item). Let
$X_{20}$ and $X_{10}$ denote the number of bins with two $S$-items
and one $S$-item, respectively, and no $T$-items. Out of possible
input items, such a bin can receive an item of size $0.52$ or an
item of size $\frac 23- \gamma_2$. We distinguish these two kinds
of bins due to the possible other continuations after $T$-items
have arrived. Let $X_{41}$ denote the number of bins with two or
three or four $S$-items and one $T$-item. Such bins cannot receive
any further items out of our inputs. Let $X_{11}$ denote the
number of bins with one $S$-item and one $T$-item. Let $X_{12}$
and $X_{22}$ denote the numbers of bins with two $T$-items and one
and two $S$-items, respectively. Obviously, there can be bins
without $S$-items containing one or two $T$-items, and we denote
their numbers by $X_{01}$ (one $T$-item) and $X_{02}$ (two
$T$-items).

We have five scenarios based on the different options and
continuations described above, and we use $ALG_i$ to denote the
cost of a given algorithm for each one of them, in the order they
were presented. Let $R$ be the (asymptotic) competitive ratio. Let
$A_i = \limsup _{M \rightarrow \infty} \frac{ALG_i}{M}$, which is
a lower bound on the competitive ratio $R$ since the optimal cost
is always $M$ (by Lemmas \ref{M1} and \ref{M2}), so for
$i=1,2,3,4,5$ we have the constraint $A_i \leq R$. The $A_i$ (for
$i=1,2,3,4,5$) will not appear explicitly as variables in the
forthcoming linear program. Instead, we will compute each $A_i$
based on the other variables in the program and substitute the
resulting expression in the constraint $A_i \leq R$. We use
$y_i=\frac{Y_i}{M}$ and $x_{ij}=\frac{X_{ij}}M$ for those values
of $i$ and $j$ such that $Y_i$ and $X_{ij}$ are defined. For all
thirteen variables there is a non-negativity constraint. In
addition, the number of items should satisfy $\sum_{i,j} j \cdot
X_{ij} = M$ and $\sum_{i,j} i \cdot X_{ij} \geq  M$ (the second
constraint is not an equality as in some cases $X_{ij}$ counts
bins with at most $(i)$ $S$-items). Using the definitions of $Y_7$
and $Y_3$ we have $Y_7=X_{60} + X_{30} +X_{20} + X_{10} + X_{41} +
X_{11} + X_{12} + X_{22}$ and $Y_3=X_{01}+X_{02}$.

We get the following four constraints:
\begin{eqnarray}
x_{41}  + x_{11}  + 2 x_{12} + 2 x_{22} + x_{01} + 2 x_{02} = 1  \label{eq1}\\
6 x_{60} + 3 x_{30} + 2 x_{20} + x_{10} + 4 x_{41} + x_{11} +
x_{12} + 2 x_{22} \geq 1
\label{eq2} \\
y_7-x_{60} - x_{30} -x_{20} - x_{10} - x_{41} - x_{11} - x_{12} -
x_{22} = 0 \label{eq3} \\
y_3 - x_{01} - x_{02} = 0 \label{eq4}
\end{eqnarray}

The costs of the algorithm are as follows. We have $ALG_1=M+X_{60}
+ X_{30} +X_{20} + X_{41} + X_{22}$, $ALG_2= M - \lceil
\frac{Y_7}{6} \rceil + Y_7$, $ALG_3=Y_7+Y_3+\frac M2$, and
$ALG_4=X_{60}+X_{41}+X_{22}+X_{12}+X_{02}+ M$.

If $Y_3 \leq \frac M2$, we have $ALG_5= Y_7+Y_3-X_{20}-X_{10}+
\frac{3M}4$, and if $Y_3 > \frac M2$, we have $ALG_5=
Y_7+Y_3-X_{20}-X_{10}+ M- \lceil\frac{Y_3}2 \rceil$.

The four first costs of the algorithm (for the first four
scenarios) gives the constraints

\begin{eqnarray}
R - x_{60} - x_{30} - x_{20} - x_{41} - x_{22} \geq 1 \label{eq5} \\
6 R - 5 y_7  \geq 6 \label{eq6} \\
2 R - 2 y_7 - 2 y_3 \geq 1 \label{eq7} \\
R - x_{60} - x_{41} - x_{22} - x_{12} - x_{02} \geq 1 \label{eq8}
\end{eqnarray}

The two final constraints form two cases (according to the value
of $y_3$), and therefore our list of constraints results in two
linear programs (with all previous constraints and two additional
ones). The inputs for the two cases are different, and therefore
they are considered separately (due to the different inputs, there
is one other different constraint except for the constraint on the
value of $y_3$). For each one of the linear programs, the
objective is to minimize the value of $R$.

One pair of constraints is $y_3 \leq \frac 12$ and $4 R - 4 y_7 -
4 y_3 + 4 x_{20}+ 4 x_{10} \geq 3$, and the alternative pair is
$y_3 \geq \frac 12$ and $ 2 R -  2 y_7 - y_3 + 2 x_{20} + 2 x_{10}
\geq 2$ (observe that the constraint $y_3\geq \frac 12$ is a relaxation of the valid constraint $y_3> \frac 12$, and thus the weaker constraint $y_3 \geq \frac 12$ is valid in this case).

Multiplying the first five constraints by the values $2$, $1$,
$3$, $2$, $1$, respectively, and taking the sum gives:
\begin{equation}2 x_{60}+ 2 x_{41} + 2 x_{12}   + 2 x_{02}  + 2 x_{22}
- 2 x_{10}
  - x_{30} - 2 x_{20} + 3 y_7 + 2 y_3 + R \geq 4 \ . \label{atotal}\end{equation}

For the first case, we take the sum of the sixth, eighth, and
tenth constraints multiplied by the values $2$, $20$, $5$,
respectively, and get:
$$ 52 R - 30 y_7  - 20 y_3   - 20 x_{60} - 20 x_{41} - 20 x_{22} - 20
x_{12} - 20 x_{02} + 20 x_{20}+ 20 x_{10}
 \geq 47 \ . $$ Summing this with ten times (\ref{atotal}) we get
$ 62 R   - 10 x_{30}    \geq 87 $, and by $x_{30} \geq 0$ we get
$R \geq \frac {87}{62}\approx 1.4032258$.

For the second case, we take the sum of the seventh, eighth, and
tenth constraints multiplied by the values $1$, $4$, $2$,
respectively, and get:
$$
10 R  - 6 y_7 - 4 y_3  - 4 x_{60} - 4 x_{41} - 4 x_{22} - 4 x_{12}
- 4 x_{02} + 4 x_{20} + 4 x_{10}
 \geq 9
\ . $$ Summing this with twice (\ref{atotal}) we get $12 R
  - 2 x_{30}  \geq 17$, and as $x_{30} \geq 0$, we have $R \geq \frac {17}{12} \approx 1.41666$.
Thus, we have proved $R \geq 1.4032258$.

\section{Online Square packing ({\sc SP})}\label{sec:squares}
We continue with the online square packing ({\sc SP}) problem. We
prove the following theorem.

\begin{theorem}
The competitive ratio of any online algorithm for {\sc SP} is at
least $1.7515445$.
\end{theorem}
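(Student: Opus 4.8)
The plan is to mimic the structure of the {\sc K-O} proof from Section~\ref{sec:knownopt}, but now exploiting Theorem~\ref{cnstrct} in (at least) two nested stages tailored to squares. First I would fix $k$ large and $N$ a large multiple of some convenient constant, and let the condition $C_1$ be ``the square is not packed as the first square of its bin'' (or a similar packing condition distinguishing squares that open a new bin from those that join an existing one). I would present a first batch of $N$ squares of side slightly above some threshold $s_1$ — a natural choice is $s_1$ just above $\frac13$, so that a bin can hold at most four such squares in a $2\times 2$ grid, or alternatively $s_1$ just above $\frac12$ — constructed via Theorem~\ref{cnstrct} so that the squares opening new bins (``large'') are genuinely larger by a factor $k$ than those that are placed into existing bins (``small''). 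The point of the adaptivity is that the adversary can later present large squares of side just below $1-s_1^{(\mathrm{small})}$ that fit next to a small square of the first batch but \emph{not} next to a large one, forcing the algorithm's choices in the first stage to be costly in at least one continuation.

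Next I would introduce a second adaptive batch of $N$ squares of side slightly above a second threshold $s_2$ (for instance just above $\frac12$ if the first batch was just above $\frac13$), again split by Theorem~\ref{cnstrct} into ``large'' second-stage squares (those opening new bins) and ``small'' ones (those packed with first-stage squares or with each other). Then, exactly as in Section~\ref{sec:knownopt}, I would branch into several possible continuations with batches of identical large squares — sides exactly $1$, or slightly above $\frac12$, or slightly below $\frac23$, etc. — chosen so that each branch is optimally packable into a controlled number of bins (this is the square-packing analogue of Lemmas~\ref{M1} and~\ref{M2}, and I would prove the optimal cost is $(1+o(1))M$ or an exact linear function of $M$ in each branch, using explicit grid packings of the small squares together with one large square per bin). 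Crucially, in two dimensions a unit bin can hold many small squares, so the combinatorics of how many first- and second-stage squares coexist with one large square in a bin is richer than the one-dimensional case; I would enumerate the relevant bin ``types'' (how many large/small first-stage squares, how many large/small second-stage squares, and whether a further large square can still be inserted) and introduce one nonnegative variable per type normalized by $M$.

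With the bin-type variables in hand, I would write down: the packing-feasibility equalities and inequalities (total number of first-stage squares $=M$, total second-stage squares $=M$, area or count constraints ensuring each type is geometrically realizable), the definitions of $Y_{s_1}$ and $Y_{s_2}$ (numbers of bins opened by the two stages, equal to the numbers of large squares), and then the cost expression $ALG_i$ for each continuation $i$, each yielding a constraint $A_i\le R$ where $A_i=\limsup ALG_i/M$. This gives a (small collection of) linear program(s), one per case split on thresholds like ``$Y_{s_2}\le M/2$'' versus ``$>M/2$'', exactly as in Section~\ref{sec:knownopt}; solving each LP (or, for the writeup, exhibiting an explicit nonnegative combination of the constraints that yields $R\ge 1.7515445$) completes the proof. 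The main obstacle I anticipate is choosing the thresholds $s_1,s_2$ and the continuation sizes so that (a) every continuation has a clean optimal packing and (b) the resulting LP's optimum is as large as $1.7515445$ — this is a design problem requiring the right geometric constants, and getting the second-dimension packing bounds (how many small squares really fit around a fixed large square) tight rather than merely via area is the delicate point; a purely area-based relaxation would give a weaker bound, so I would need the sharper grid-packing counts to push past the previous record of $1.680783$.
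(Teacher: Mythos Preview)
Your high-level framework (two adaptive batches via Theorem~\ref{cnstrct}, several continuations, bin-type variables, optimization) matches the paper, but several design choices you propose differ from the paper's in ways that matter, and at least two of them are genuine gaps.

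First, your thresholds are off. The paper takes the first batch ($F$-items) of side just above $\tfrac14$ (so up to nine per bin) and the second batch ($T$-items) of side just above $\tfrac13$ (so up to four per bin). Your suggestion of $s_1>\tfrac13$ and $s_2>\tfrac12$ collapses much of the combinatorics: with $s_2>\tfrac12$ no two second-batch squares can ever share a bin, so the adaptive large/small split of Theorem~\ref{cnstrct} becomes degenerate for that batch. More importantly, the condition $C_1$ for the second batch is \emph{not} simply ``first in its bin'': the paper's $C_{12}$ is ``the bin already contains an item of size above $\tfrac13$, or at least five $F$-items''. This refined condition is what makes every non-empty bin (other than those with at most five $F$-items) unable to receive a $\tfrac23-\gamma_2$ square later, and it is essential for the $ALG_3$ accounting.

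Second, and this is the deeper structural point you miss by mimicking Section~\ref{sec:knownopt} too closely: here the optimal cost is \emph{not} a fixed $M$ across continuations. The paper explicitly allows $OPT_i$ to depend on the algorithm's behavior (e.g.\ $OPT_1\le \tfrac{M}{5}-\tfrac{4Y_4}{45}+2$, $OPT_3\le \tfrac{S_3}{3}+\tfrac{L_3}{4}+2$), and the lower bound comes from the ratios $A_i/O_i$. Correspondingly, the \emph{number} of $T$-items is not fixed at $N$ but is determined adaptively by a stopping condition $8S_3+15L_3\ge 12M$ on the running counts of small and large $T$-items; this balance is exactly what forces the offline packings in Lemmas~\ref{lema7}--\ref{lema8} to absorb all $F$-items. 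Your plan of ``optimal cost is $(1+o(1))M$ in each branch'' and ``$A_i\le R$'' would give a linear program, but the paper's constraints are of the form $A_i\le R\cdot O_i$ with $O_i$ itself a linear expression in the variables, yielding a genuinely nonlinear program whose optimum is the claimed $1.7515445$. Without the adaptive item count and the variable $OPT$, you will not reach this value.
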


Here, in the description of the input, when we refer to the size
of an item, this means the length of the side of the square (and
not its area). Consider the following input. For a large positive
even integer $M$ and $k=10$, we define an input based on using
Theorem \ref{cnstrct} twice. The construction is similar to that
of the previous section, though here we are not committed to a
specific optimal cost, and we take into account the
multidimensionality. Moreover, for one of the item types the
number of such items is also determined by the action of the
algorithm (which was difficult to implement in the previous
section when the cost of an optimal packing is fixed in advance,
and we did not use such an approach there as extensively as in the
current section). Here, we only compute upper bounds on the
optimal cost for each case.

The outline of the construction is as follows.  The first part of
the input will consist of $M$ items of sizes slightly above $\frac
14$ (such that some of them, those packed first into bins, are
larger than the others), then, there are items of sizes slightly
above $\frac 13$ (where such items that are packed into bins
containing relatively few items, where the exact condition is
defined below, will be larger than other items of this last kind).
Finally, there will be items of one of the sizes: $\frac 35$, and
slightly below $\frac 23$ (all of them will have exactly the same
size), such that every larger item of size slightly above $\frac
13$ cannot be packed together with such an item of size slightly
smaller than $\frac 23$. Additionally, after the first $M$ items
arrive, it is possible that instead of the input explained here
there are items of sizes slightly below $\frac 34$, such that it
can be only be combined with the smaller items out of the first
$M$ items (any bin with an item of size slightly below $\frac 34$
may have at most five smaller items out of the first $M$ items in
a common bin).

Next, we formally define the construction.  Let the condition $C_{11}$ be that the item is not packed as a
first item into a bin. This is the condition we will use for items
of sizes slightly above $\frac 14$. For items of sizes slightly
above $\frac 13$, let the condition $C_{12}$ be that the item is
either packed in a bin already containing an item of size above
$\frac 13$, or that it contains at least five items whose sizes
are in $(\frac 14,\frac 13]$.

The first $M$ items are defined as follows.  Using Theorem
\ref{cnstrct}, we create $M$ items such that the size of item $i$
is $\frac 14+a_i$. These items are called $F$-items. The sizes of
items are in $(0.25, 0.2501)$, and there is a value $\gamma_1$
such that any item whose packing satisfies condition $C_{11}$ has
size below $\frac 14 + \gamma_1$ and for any item whose packing
does not satisfy $C_{11}$, it has size above $\frac 14 +
\gamma_1$. The first kind of items are called small $F$-items, and
the second type items are called large $F$-items. No matter how
the input continues, as any packing of the first $M$ items
requires at least $ \frac{M}9$ bins, the cost of an optimal
solution is $\Omega(M)$.

Let $Y_4$ denote the current number of bins used by the algorithm,
and this is also the number of large $F$-items. A possible
continuation at this point is $\lceil \frac{M-Y_4}5 \rceil$ items
of (identical) sizes equal to $\frac 34 - \gamma_1$. Note that
such an item cannot be packed into a bin with an item of size
above $\frac 14+\gamma_1$, as it cannot be packed next to it or
below (or above) it, and the remaining space (not next to it or
below it or above it) is too small (the sum of the diagonals of
these two items is too large to be packed into a unit square bin).

\begin{lemma}\label{lema6}
There exists a packing of the items of the presented sequence (in
this case) of cost at most $\frac{M}5 - \frac{4Y_4}{45} +2$.
\end{lemma}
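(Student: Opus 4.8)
The plan is to exhibit an explicit feasible packing whose cost meets the claimed bound, handling the two item types ($F$-items and the items of size $\frac 34-\gamma_1$) separately and then accounting for the small overhead from rounding. First I would recall that there are exactly $Y_4$ large $F$-items and $M-Y_4$ small $F$-items, and $\lceil\frac{M-Y_4}5\rceil$ items of size $\frac 34-\gamma_1$. The key geometric facts are: (a) a unit bin can hold up to nine $F$-items (in a $3\times 3$ arrangement, since their sizes are below $0.2501<\frac 13$); (b) a bin containing one item of size $\frac 34-\gamma_1$ has no room for any $F$-item, by the diagonal argument noted just before the lemma; and (c) a bin with no item of size $\frac 34-\gamma_1$ can freely mix large and small $F$-items up to nine of them, since size only matters through the $<\frac13$ bound.

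Next I would build the packing. Put each of the $\lceil\frac{M-Y_4}5\rceil$ large items (size $\frac34-\gamma_1$) alone in its own bin; this contributes $\lceil\frac{M-Y_4}5\rceil\le \frac{M-Y_4}5+1$ bins. Then pack all $M$ $F$-items nine to a bin; this uses $\lceil\frac M9\rceil\le\frac M9+1$ bins. Adding these gives at most $\frac{M-Y_4}5+\frac M9+2 = \frac M5-\frac{Y_4}5+\frac M9+2$, which is \emph{not} yet the claimed bound $\frac M5-\frac{4Y_4}{45}+2$ — the coefficients of $M$ and $Y_4$ are off. So the naive split is too wasteful, and the real point of the lemma is that one must reuse the "$\frac M9$" bins more cleverly: since $\frac1{45}=\frac15-\frac19$ and $\frac{4}{45}=\frac15-\frac1{45}$... more precisely $\frac M5-\frac{4Y_4}{45}=\frac{M-Y_4}5+\frac{Y_4}{9}$, I would instead combine the $F$-items into the same count as follows. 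Observe $\frac{M-Y_4}5+\frac{Y_4}9$ counts: one bin per $5$ items of size $\frac34-\gamma_1$ is exactly $\lceil\frac{M-Y_4}5\rceil$ bins, and separately the large $F$-items alone need $\lceil\frac{Y_4}9\rceil$ bins — but the small $F$-items must then be absorbed for free. The mechanism: a bin containing an item of size $\frac34-\gamma_1$ has a leftover strip, and although it cannot host an $F$-item of size $>\frac14$, nothing in the statement's derivation forbids... actually it does forbid it. Hence the correct reading must be that the small $F$-items are packed \emph{nine per bin} too but the arithmetic $\lceil\frac{M-Y_4}{5}\rceil+\lceil\frac{Y_4}{9}\rceil+\lceil\frac{M-Y_4}{9}\rceil$ does not simplify to the bound either; the genuine construction has to put the $M-Y_4$ small $F$-items together with the leftover capacity somewhere.

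Resolving this is the main obstacle, and I expect the intended argument to be: pack the $\lceil\frac{M-Y_4}{5}\rceil$ items of size $\frac 34-\gamma_1$ one per bin \emph{and} fill the remaining $\frac14-\gamma_1$-wide border region of each such bin with... no — the decisive observation is presumably that one need \emph{not} use fresh bins for the $F$-items at all in the optimal-style packing: instead put the large $F$-items, nine per bin, using $\lceil Y_4/9\rceil$ bins, and then the $M-Y_4$ small $F$-items get distributed, at most five per bin, into the $\lceil (M-Y_4)/5\rceil$ bins \emph{alongside}... but that contradicts fact (b). Given the tension, what I would actually do in the writeup is: take $\lceil\frac{M-Y_4}{5}\rceil$ bins for the $\frac34-\gamma_1$ items alone, and then note that the $F$-items are $M$ in number with at most nine per bin, so $\lceil M/9\rceil$ bins, but \emph{subtract} the slack by merging: since each of the $\lceil\frac{M-Y_4}5\rceil$ bins wastes area, and the real claim $\frac M5 - \frac{4Y_4}{45}+2 = \frac{M-Y_4}{5} + \frac{Y_4}{9} + 2$ suggests the $F$-bins should number $\frac{Y_4}{9}$, i.e. \emph{only the large $F$-items occupy dedicated bins}, I conclude the small $F$-items must be placed five-to-a-bin together with one $\frac34-\gamma_1$ item is impossible, so they must instead be placed into bins with... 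Honestly, the cleanest route: pack large $F$-items nine per bin ($\lceil Y_4/9\rceil$ bins); pack each size-$(\frac34-\gamma_1)$ item with exactly five small $F$-items — wait, (b) forbids this. Therefore the correct final construction, which I would verify carefully, is: $\lceil\frac{M-Y_4}{5}\rceil$ bins each holding one item of size $\frac34-\gamma_1$ and nothing else, plus place \emph{all} $M$ $F$-items into $\lceil M/9\rceil$ bins, and then improve by noticing many of these latter bins can be eliminated because small $F$-items together with the size-$(\frac34-\gamma_1)$ items... I would ultimately present it as: $\lceil\frac{M-Y_4}{5}\rceil + \lceil\frac{Y_4}{9}\rceil$ bins suffice by a packing argument showing the $M-Y_4$ small $F$-items fit, five at a time, as "padding" rows that the size-$(\frac34-\gamma_1)$ items leave room for in a $1\times(\frac14-\gamma_1)$ strip — since $5\cdot(\frac14)=\frac 54 > 1$, at most... at most \emph{three} fit in such a strip, not five; with three per strip one needs $\lceil\frac{M-Y_4}{3}\rceil$ extra-small-item slots, still not matching. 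The honest main obstacle, then, is pinning down exactly which geometric packing yields the coefficient $\frac{4}{45}=\frac15-\frac1{45}$ on $Y_4$; I expect it comes from: dedicating $\lceil\frac{M-Y_4}{5}\rceil$ bins to the $\frac 34-\gamma_1$ items, and packing the $F$-items so that each of those $\frac{M-Y_4}{5}$ bins' leftover $\frac14$-strip holds nothing, while the $F$-items go nine-per-bin but the count is reduced to $\frac{Y_4}{9}$ because exactly $M-Y_4$ of the $F$-items (the small ones) are instead tucked, five per bin... no. I will settle the exact mechanism by working out the area/geometry budget explicitly in the full proof; the bound's additive $+2$ clearly absorbs the two ceiling roundings, so once the right "$\frac{M-Y_4}{5}$ bins plus $\frac{Y_4}{9}$ bins" combinatorial decomposition is justified geometrically (each dedicated $\frac34-\gamma_1$ bin also swallowing five small $F$-items in its border — which, rechecking fact (b), must actually be consistent because five squares of side $\le 0.2501$ stacked in a single column of width $0.2501$ and total height $5\cdot0.2501<1.26$... still too tall; in a column they don't fit either). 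Conclusion for the referee-facing text: the lemma is proved by the explicit packing "$\lceil\frac{M-Y_4}{5}\rceil$ bins for the $\frac34-\gamma_1$ items and $\lceil\frac{Y_4}{9}\rceil$ bins of nine large $F$-items, then the $M-Y_4$ small $F$-items inserted without opening new bins", with the insertion step being the crux and the $+2$ covering rounding; I would write this insertion step out in full geometric detail, and that step is where the real work lies.
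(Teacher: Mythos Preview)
Your proposal has a genuine gap stemming from a misreading of the ``diagonal argument'' stated just before the lemma. That argument says an item of size $\frac34-\gamma_1$ cannot share a bin with an item of size \emph{above} $\frac14+\gamma_1$, i.e.\ with a \emph{large} $F$-item. It says nothing about small $F$-items, whose sizes are strictly below $\frac14+\gamma_1$. Your ``fact~(b)'' --- that no $F$-item whatsoever fits alongside a $\frac34-\gamma_1$ item --- is therefore false, and this is exactly why you could never make the arithmetic $\frac{M-Y_4}{5}+\frac{Y_4}{9}$ come out.

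The paper's construction is precisely the one you kept circling and then rejecting: place the item of size $\frac34-\gamma_1$ in one corner of the unit square, leaving an $L$-shaped region of uniform width $\frac14+\gamma_1$. A small $F$-item (side $<\frac14+\gamma_1$) fits anywhere in this strip. Put one in the opposite corner, then two more along each arm of the $L$ (each arm has remaining length $\frac34-\gamma_1$, which accommodates two squares of side at most $0.2501$ since $2\cdot 0.2501 < \frac34-\gamma_1$). That gives five small $F$-items per such bin. With $\lceil\frac{M-Y_4}{5}\rceil$ bins of this kind all $M-Y_4$ small $F$-items are absorbed, and the $Y_4$ large $F$-items go nine per bin into $\lceil\frac{Y_4}{9}\rceil$ further bins. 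The total is at most $\frac{M-Y_4}{5}+1+\frac{Y_4}{9}+1=\frac{M}{5}-\frac{4Y_4}{45}+2$, as claimed. Once you correct fact~(b), your outline is exactly this.
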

\begin{proof} A possible packing of the items of sizes $\frac 34-\gamma_1$
together with the $(M)$ $F$-items is to use  $\lceil \frac{M-Y_4}5
\rceil$ bins for the new items and combine five small $F$-items
into these bins (one such bin may have a smaller number of
$F$-items). This packing is feasible as the large item can be
packed in one corner of a unit square bin, leaving an $L$ shaped
area of width $\frac 14 +\gamma_1$, the opposite corner will
contain an $F$-item, and there are two additional such items next
to it on each side of the $L$ shaped area. The remaining large
$F$-items are packed into bins containing nine items each
(possibly except for one bin), such that the number of such bins
is $\lceil \frac{Y_4}9\rceil$. The total number of bins in this
packing is at most $\frac{M}5 - \frac{4Y_4}{45} +2$.
\end{proof}

The algorithm has one large $F$-item in each of the first $Y_4$
bins and therefore it uses a new bin for every item of size $\frac
34 - \gamma_1$. Thus, the total number of bins in the packing of
the algorithm (in this case) is exactly $Y_4+ \lceil \frac{M-Y_4}5
\rceil$.

We explain the continuation of the input in the case where the
continuation defined above is not used. Here, for the
construction, we state an upper bound on the number of items as
the exact number of items is not known in advance and it will be
determined during the presentation of the input. There will be at
most $1.5 M$ items of sizes slightly above $\frac 13$. We will use
the variables $S_3$ and $L_3$ to denote the numbers of items for
which condition $C_{12}$ was satisfied and was not satisfied,
respectively, in the current construction. Initialize $S_3=L_3=0$,
and increase the value of the suitable variable by $1$ when a new
item is presented.  The $i$th item of the current construction has
size $\frac 13+a_i$, and the sizes of items are in $(\frac 13,
0.33344)$. These items are called $T$-items. There is a value
$\gamma_2$ such that any item whose packing satisfies condition
$C_{12}$ has size below $\frac 13 + \gamma_2$ and any item whose
packing does not satisfy $C_{12}$ has size above $\frac 13 +
\gamma_2$. The first kind of items are called smaller $T$-items
 and the second type items are
called larger $T$-items. Present items until $8 S_3 + 15 L_3 \geq
12 M$ holds (this does not hold initially, so at least one item is
presented, and this is defined to be condition $C'$). We show that
indeed at most $1.5 M$ items are presented. If $1.5 M$ items were
already presented, $8 S_3 + 15 L_3 \geq 8 \cdot (1.5M) = 12 M$,
and therefore the construction is stopped. In what follows, let
$S_3$ and $L_3$ denote the final values of these variables. Before
the last item of this part of the input was presented, it either
was the case that $8 (S_3-1) + 15 L_3 < 12 M$ or $8 S_3 + 15 (L_3
-1) < 12 M$ (as exactly one of $S_3$ and $L_3$ was increased by
$1$ when the last item was presented), so $8 S_3 + 15 L_3 -15 < 12
M$, or alternatively, $8 S_3 + 15 L_3 \leq 12 M +15$. Moreover,
$S_3+L_3 \geq \frac {4M}5$ as $12 M \leq 8 S_3 + 15 L_3 \leq 15
(S_3+L_3)$. Let $M'=S_3+L_3$ (and we have $M'=\Theta(M)$).

Here, there are two possible continuations. The first one is
$(\lfloor \frac{M'}3 \rfloor )$ identical items, each of size
exactly $0.6$, and the second one is $\lfloor \frac{S_3}3 \rfloor$
identical items, each of size $\frac 23-\gamma_2$.

\begin{lemma}\label{lema7}
The optimal cost in the first continuation is at most $\frac M9 + \frac{7S_3}{27}+
\frac{7L_3}{27}+3 $.
\end{lemma}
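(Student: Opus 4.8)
The plan is to exhibit an explicit feasible packing of all presented items (the $M$ $F$-items, the $S_3+L_3=M'$ $T$-items, and the $\lfloor M'/3\rfloor$ items of size $0.6$) into at most $\frac M9+\frac{7S_3}{27}+\frac{7L_3}{27}+3$ bins, and then simply count the bins. The main structural observation I would use is the analogue of the ``two $S$-items and two $T$-items per bin'' packing from Section~\ref{sec:knownopt}, adapted to squares: a unit bin can hold one item of size $0.6$ together with one $T$-item (of size just above $\frac 13$), since $0.6+0.334<1$ in each coordinate, and in fact a bin holding one $0.6$-item still has an $L$-shaped region of width $0.4$ which can accommodate an arrangement of a few small squares. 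So I would group the $0.6$-items with $T$-items, and pack the leftover $F$-items densely nine-to-a-bin.

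The key steps, in order, would be: (1) Note $|\{0.6\text{-items}\}|=\lfloor M'/3\rfloor$ and $M'=S_3+L_3$, so we have roughly three $T$-items available for each $0.6$-item. Pack the $0.6$-items into $\lceil M'/3\rceil\le \lfloor M'/3\rfloor+1$ bins, placing one $0.6$-item in a corner of each such bin together with two $T$-items arranged in the $L$-shaped strip of width $0.4$ (two squares of side $<0.334$ fit along a strip of length $1$), and possibly some small $F$-items in the remaining sliver — but to keep the count clean I would only rely on the $T$-items here. That uses up $2\lceil M'/3\rceil$ $T$-items, leaving about $M'/3$ $T$-items. (2) Pack the remaining $T$-items into bins nine-to-a-bin? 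No — $T$-items have size just above $\frac13$, so only four fit in a bin $(2\times2)$; hence these remaining $\approx M'/3$ $T$-items need $\approx M'/12$ bins. (3) Pack all $M$ $F$-items nine-to-a-bin (a $3\times 3$ grid, since sizes are below $0.2501<\frac13$), using $\lceil M/9\rceil$ bins. (4) Add up: $\lceil M/9\rceil + \lceil M'/3\rceil + \lceil (M'/3)/4\rceil + O(1)$. The target bound has coefficient $\frac{7}{27}$ on $S_3$ and on $L_3$, i.e.\ $\frac{7M'}{27}$, and indeed $\frac13 + \frac{1}{12} = \frac{4}{12}+\frac1{12}=\frac{5}{12}$, which is larger than $\frac{7}{27}$, so the naive packing above is not good enough — I need to be cleverer and also place $T$-items or $F$-items into the $0.6$-bins more aggressively, or use the $F$-items to absorb $T$-items. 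The correct accounting must combine the nine-$F$-items-per-bin packing with the $0.6$-bins so that the total is $\frac{M}{9}$ (all from $F$-items) plus $\frac{7M'}{27}$ (from the $T$- and $0.6$-items together); since $\frac{7}{27}\cdot 3 = \frac{7}{9} > 1$, there is room: we need $\lfloor M'/3\rfloor$ bins just for the $0.6$-items, and $\frac{7M'}{27}-\frac{M'}{3} = \frac{7M'-9M'}{27}<0$ — so in fact the intended packing must put \emph{all} $T$-items into the $0.6$-bins (three $T$-items per $0.6$-item is exactly the ratio $M':\lfloor M'/3\rfloor$), giving $\lfloor M'/3\rfloor \le M'/3$ bins for the combined $0.6$/$T$ structure, plus $\lceil M/9\rceil$ for the $F$-items, for a total $\le \frac M9 + \frac{M'}3 + 2 = \frac M9 + \frac{S_3+L_3}{3}+2$ — but the claimed bound has $\frac{7(S_3+L_3)}{27}<\frac{S_3+L_3}{3}$, so this still overshoots. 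Hence the genuine packing must additionally exploit that the $0.6$-item leaves an $L$-region of width $0.4 > 0.334$ that holds \emph{more} than three $T$-items (an $L$ of width $0.4$ and arm-length $1$ holds $2+2-1=3$ squares of side $0.334$... this is exactly $3$, marginal).

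The real argument, therefore, must pack $F$-items into the $0.6$-bins as well: a bin with one $0.6$-square has an $L$-region of outer side $1$ and width $0.4$, which comfortably holds, say, $3$ $F$-items of side $\le 0.2501$ along each arm, i.e.\ $5$ of them (with overlap square counted once) — but better, I would instead reserve the $F$-items to be packed with the $0.6$-items at a rate that, combined with the $M/9$-rate bin-packing, yields the stated coefficients. I expect the bookkeeping here to be the main obstacle: one must choose the mix (how many $F$-items, $T$-items, and whether a $0.6$-item goes into each ``combined'' bin) so that the three item-classes are exhausted simultaneously and the bin count telescopes to exactly $\frac M9+\frac{7S_3}{27}+\frac{7L_3}{27}+O(1)$. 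Concretely I would (a) put the $\lfloor M'/3\rfloor$ $0.6$-items into bins, each also holding two $T$-items and three $F$-items (feasible by the $L$-region geometry, exactly as in the proof of Lemma~\ref{lema6}); (b) this consumes $2\lfloor M'/3\rfloor \approx \frac{2M'}3$ $T$-items and $3\lfloor M'/3\rfloor\approx M'$ $F$-items; (c) the remaining $\approx M'/3$ $T$-items go four-to-a-bin, using $\approx M'/12$ bins; (d) the remaining $\approx M-M'$ $F$-items go nine-to-a-bin, using $\approx (M-M')/9$ bins; (e) total $\approx \frac{M'}3 + \frac{M'}{12} + \frac{M-M'}9 = \frac M9 + M'(\frac13+\frac1{12}-\frac19) = \frac M9 + M'\cdot\frac{12+3-4}{36} = \frac M9 + \frac{11M'}{36}$, which is still not $\frac{7M'}{27}$. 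Since $\frac{7}{27}=\frac{28}{108}<\frac{33}{108}=\frac{11}{36}$, I am still short, so the final packing must be tighter still — presumably packing \emph{five} $F$-items and the geometry must let a $0.6$-bin also hold more $T$-items, or the remaining $T$-items pack better by mixing with leftover small $F$-items. I would resolve this by carefully re-deriving, from the constants $0.6$, $\frac13$, $\frac14$, exactly how many small squares of each type co-fit with a $0.6$-square (the honest answer drives the $\frac{7}{27}$), and then doing the linear bookkeeping; the geometric feasibility claims are the crux and the arithmetic is then routine. $\rule{0.5em}{0.5em}$
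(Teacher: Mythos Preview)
Your proposal does not actually complete the proof: after trying several packings (two $T$-items per $0.6$-bin; three $T$-items but no $F$-items; two $T$-items plus three $F$-items) and finding each one falls short of the coefficient $\tfrac{7}{27}$, you stop and assert that ``carefully re-deriving'' the geometry will deliver the answer. That is precisely the step you are asked to supply. The combination you never tried is the one the paper uses: in each $0.6$-bin pack \emph{three} $T$-items and \emph{two} $F$-items. Geometrically, place the $0.6$-square in one corner of the unit bin; the remaining $L$-shaped region has width $0.4$. Put a $T$-item (side $<0.334<0.4$) in each of the other three corners of the bin; along each arm of the $L$ the two $T$-items at its ends leave a gap of length $>1-2\cdot 0.334=0.332>0.2501$ and width $0.4$, into which one $F$-item fits. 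Thus one $0.6$-item, three $T$-items and two $F$-items coexist in a bin.

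With this packing the arithmetic is immediate. The $\lfloor M'/3\rfloor$ combined bins absorb all $T$-items except at most two (one extra bin) and exactly $2\lfloor M'/3\rfloor$ of the $F$-items; since $M'\le \tfrac{3M}{2}$ there are $M-2\lfloor M'/3\rfloor\ge 0$ leftover $F$-items, packed nine to a bin in at most $\tfrac{M}{9}-\tfrac{2M'}{27}+2$ bins. The total is at most
\[
\Big\lfloor \tfrac{M'}{3}\Big\rfloor \;+\; \Big(\tfrac{M}{9}-\tfrac{2M'}{27}+2\Big)\;+\;1
\;\le\; \tfrac{M'}{3}+\tfrac{M}{9}-\tfrac{2M'}{27}+3
\;=\; \tfrac{M}{9}+\tfrac{7M'}{27}+3,
\]
which is exactly the claimed bound with $M'=S_3+L_3$. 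Your attempt (a)--(e) with two $T$-items and three $F$-items per $0.6$-bin leaves $\approx M'/3$ $T$-items to be packed only four to a bin, and that extra $\tfrac{M'}{12}$ term is what pushed your count to $\tfrac{11M'}{36}$ instead of $\tfrac{7M'}{27}$.
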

\begin{proof}
A possible packing for this case consists of $\lfloor \frac{M'}3
\rfloor$ bins with one item of size $0.6$, three $T$-items, and
two $F$-items (placing the item of size $0.6$ in a corner leaves
an $L$ shaped area of width $0.4$, so we place one $T$-item in
each of the other corners and in the remaining space between each
pair of adjacent $T$-items we pack an $F$-item). As $M' \leq
\frac{3M}2$, there are $M- 2 \lfloor \frac{M'}3 \rfloor \geq 0$
unpacked $F$-items, and they are packed into exactly $\lceil
\frac{M- 2 \lfloor \frac{M'}3 \rfloor }9 \rceil \leq \frac M9 -
\frac{2M'}{27} +2$ bins, where each bin has nine items (the last
bin may have less items). In addition, there are at most two
unpacked $T$-items, and they are packed into a bin together. The
total number of bins is at most $\frac M9 + \frac{7M'}{27}+3 =
\frac M9 + \frac{7S_3}{27}+ \frac{7L_3}{27}+3 $.
\end{proof}

\begin{lemma}\label{lema8}
The optimal cost in the second continuation is at most
$\frac{S_3}3 + \frac{L_3}4+2$.
\end{lemma}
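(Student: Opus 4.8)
The plan is to exhibit an explicit feasible packing of the whole sequence in this case (the $M$ $F$-items, the $S_3$ smaller and $L_3$ larger $T$-items, and the $\lfloor S_3/3\rfloor$ items of side $\frac23-\gamma_2$) whose cost is at most $\frac{S_3}3+\frac{L_3}4+2$. The packing will use three groups of bins: (a) about $L_3/4$ bins, each carrying four larger $T$-items; (b) exactly $\lfloor S_3/3\rfloor$ bins, each carrying one item of side $\frac23-\gamma_2$ together with three smaller $T$-items; and (c) one extra ``garbage'' bin for the at most two leftover smaller $T$-items. The crucial point is that the bins of groups (a) and (b) have enough spare room to also absorb \emph{all} $M$ $F$-items, and the arithmetic that makes this possible is the stopping condition: since the input stopped, $8S_3+15L_3\ge 12M$, hence $M\le \frac{2S_3}3+\frac{5L_3}4$. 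So it suffices that each group-(a) bin additionally holds $5$ $F$-items and each group-(b) bin additionally holds $2$ $F$-items.

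The first geometric claim is that a unit bin can simultaneously hold four larger $T$-items (sides $<0.33344$) and five $F$-items (sides $<0.2501$). I would place the four $T$-items as a $2\times2$ block in one corner; this block is contained in a square of side $<0.66688$, leaving a vertical strip of width $>0.33312$ along one side (which stacks three $F$-items, since $3\cdot 0.2501<1$) and a horizontal strip of height $>0.33312$ and length $>0.66688$ along the top of the block (which holds two more $F$-items side by side, since $2\cdot 0.2501<0.66688$). The second geometric claim is that a unit bin can hold one item of side $\frac23-\gamma_2$, three smaller $T$-items (sides $<\frac13+\gamma_2$), and two $F$-items. Here I would place the big item in a corner; the complementary $L$-shaped region has arm-width exactly $\frac13+\gamma_2$, so each smaller $T$-item fits across an arm. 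Stacking two smaller $T$-items in the long (height-$1$) arm leaves height $>\frac13-2\gamma_2$ on top for one $F$-item, and placing the third smaller $T$-item at one end of the short arm leaves length $>\frac13-2\gamma_2$ for a second $F$-item; all of this is feasible because $\gamma_2$ is tiny and a side just above $\frac14$ is still below $\frac13$.

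Then I would finish by counting. Group (a) uses $\lceil L_3/4\rceil$ bins and accommodates all $L_3$ larger $T$-items plus up to $5\lceil L_3/4\rceil\ge \frac{5L_3}4$ $F$-items; group (b) uses $\lfloor S_3/3\rfloor$ bins and accommodates all $\lfloor S_3/3\rfloor$ big items, $3\lfloor S_3/3\rfloor$ smaller $T$-items, and up to $2\lfloor S_3/3\rfloor\ge \frac{2S_3}3-2$ $F$-items. By $M\le\frac{2S_3}3+\frac{5L_3}4$, the number of $F$-items not yet packed is at most $2$, and the number of smaller $T$-items not yet packed is at most $2$; these at most four small squares fit into one more bin. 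The total is at most $\lceil L_3/4\rceil+\lfloor S_3/3\rfloor+1\le\frac{L_3}4+1+\frac{S_3}3+1=\frac{S_3}3+\frac{L_3}4+2$, as claimed.

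The main obstacle is the first geometric claim: fitting four near-$\frac13$ squares and five near-$\frac14$ squares into the unit square. The total area is only about $0.76$, so it is plausible, but one must choose the arrangement so that the two leftover strips are each wide enough for a $\frac14$-square, and one must track the precise constants $0.33344$ and $0.2501$ — in particular that $3\cdot 0.2501<1$ while $4\cdot 0.2501>1$, which is exactly why each strip holds three rather than four $F$-items, and hence why $5$ (and not more) is the correct count to aim for. The second configuration and the bookkeeping are routine once the stopping-condition inequality $M\le\frac{2S_3}3+\frac{5L_3}4$ is in hand.
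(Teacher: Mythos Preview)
Your proposal is correct and follows essentially the same approach as the paper: the same two bin configurations (one big item plus three smaller $T$-items plus two $F$-items; four $T$-items plus five $F$-items in an $L$-shape around a $2\times 2$ block) and the same use of the stopping inequality $8S_3+15L_3\ge 12M$ to verify that all $F$-items are absorbed. The only cosmetic difference is bookkeeping: the paper merges the leftover smaller $T$-items into the second group by taking $\lceil (L_3+2)/4\rceil$ such bins, whereas you route the leftovers to a separate garbage bin; both give the same $+2$. One small slip to fix in your write-up: the horizontal strip above the $2\times2$ block has length equal to the block's width, which is \emph{less} than $0.66688$ (not greater), but still greater than $\tfrac23>2\cdot 0.2501$, so your conclusion that it holds two $F$-items remains valid.
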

\begin{proof}
A possible packing for this case consists of $\lfloor \frac{S_3}3
\rfloor$ bins with one item of size $\frac 23 -\gamma_2$, three
small $T$-items, and two $F$-items (placing the item of size
$\frac 23 -\gamma_2$ in a corner of a unit square bin leaves an
$L$ shaped area of width $\frac 13+\gamma_2$ where the remaining
items are packed). There are at least $(S_3-2)$ $T$-items that
were packed and at least $(2\frac{S_3-2}3)$ $F$-items are packed.
There are also $\lceil \frac{L_3+2}{4} \rceil$ bins, each with at
most four $T$-items and at most five $F$-items (there is a square
with four larger items in a corner and the smaller items are
packed around them, in the $L$-shaped area of the bin). This
allows to pack the remaining $T$-items
 as there is space for at least
$S_3+L_3$ such items, and to pack all $F$-items as there is a
place for at least $2\frac{S_3-2}3+5\frac{L_3+2}4 \geq \frac{2
S_3} 3 + 5 \frac {L_3}4 \geq M$  such items, where the last
inequality holds by the condition $8 S_3 + 15 L_3 \geq 12 M$. The
total number of bins is at most $\frac{S_3}3 + \frac{L_3}4+2$.
\end{proof}

Let $Y_3$ denote the number of new bins created for the $T$-items
(where these bins were empty prior to the arrival of $T$-items).
Here, there may be previously existing bins containing larger
$T$-items (with at most four $F$-items), and $Y_3 \leq L_3$.
Consider the kinds of bins the algorithm may have after all
$T$-items have arrived. Once again, $T$-items do not necessarily
arrive, but we will deduce the numbers of different kinds of bins
the algorithm has after all $F$-items have arrived based on number
of bins existing after the arrival of $T$-items. After all
$T$-items have arrived, a non-empty bin can receive an item of
size $0.6$ if it has at most five items, out of which at most
three are $T$-items. The construction is such that any non-empty
bin except for bins with at most five $F$-items has either at
least six items in total (each of size above $\frac 14$) or it has
an item of size above $\frac 13+\gamma_2$ (or both options may
occur simultaneously), and therefore it cannot receive an item of
size above $\frac 23-\gamma_2$.

Consider a given online algorithm for {\sc SP} after the $T$-items
were presented. Let $X_{90}$ denote the number of bins with six,
seven, eight, or nine $F$-items and no $T$-items. Such a bin
cannot receive any further items in addition to its $F$-items in
any of our continuations. Let $X_{50}$ denote the number of bins
with at least one and at most five $F$-items and no $T$-items.
Such a bin can receive any item of size larger than $\frac 12$
that may arrive (but not an item of size $\frac 34 -\gamma_1$).
Let $X_{81}$ denote the number of bins with five, six, seven, or
eight $F$-items and one (small) $T$-item. Let $X_{41}$ denote the
number of bins with at least one and at most four $F$-items and
one (large) $T$-item. Let $X_{72}$ denote the number of bins with
five, six, or seven $F$-items and two (small) $T$-items. Let
$X_{42}$ denote the number of bins with
 four $F$-items and two $T$-items  (out of
which one is small and one is large). Let $X_{32}$ be the number
of bins with at least one and at most three $F$-items and two
$T$-items  (out of which one is small and one is large). Let
$X_{63}$ denote the number of bins with five or six $F$-items and
three $T$-items (all of which are small). Let $X_{43}$ denote the
number of bins with three or four $F$-items and three $T$-items
(out of which two are small and one is large). Let $X_{23}$ denote
the number of bins with one or two $F$-items and three $T$-items
(out of which two are small and one is large). Let $X_{54}$ denote
the number of bins with five $F$-items and four $T$-items (all of
which are small). Let $X_{44}$ denote the number of bins with two
or three or four $F$-items and four $T$-items (out of which three
are small and one is large). Let $X_{14}$ denote the number of
bins with one $F$-item and four $T$-items (out of which three are
small and one is large).

Let $X_{03}$ be the number of bins with no $F$-items and at least
one and at most three $T$-items, one of which is a large $T$-item,
while the others (at most two) are small. Let $X_{04}$ be the
number of bins with no $F$ items and four $T$-items, one of which
is large, while three are small.

We have three scenarios, and we use $ALG_i$ to denote the cost of
the algorithm for each one of them, in the order they were
presented. Let $A_i = \limsup _{M \rightarrow \infty}
\frac{ALG_i}{M}$. The optimal cost is always in $\Theta(M)$, and
we let $OPT_i$ denote our upper bounds on the optimal cost of the
$i$th scenario, $O_i = \liminf _{M \rightarrow \infty}
\frac{OPT_i}{M}$, and the ratio $\frac{A_i}{O_i}$ is lower bound
on the competitive ratio $R$. We use the notation
$y_i=\frac{Y_i}{M}$ and $x_{ij}=\frac{X_{ij}}M$ for those values
of $i$ and $j$ such that $Y_i$ and $X_{ij}$ are defined. Let
$\ell_3=\frac{L_3}M$ and $s_3=\frac{S_3}{M}$, so  $12 \leq 8 s_3 +
15 \ell_3 \leq 12 +\frac{15}M $, and for $M$ growing to infinity,
$8 s_3 + 15 \ell_3 = 12 $.

Let $R$ be the (asymptotic) competitive ratio. For all twenty
variables there is a non-negativity constraint. In addition, the
number of items should satisfy $\sum_{i,j} j \cdot X_{ij} \geq
S_3+L_3$ and $\sum_{i,j} i \cdot X_{ij} \geq  M$ (once again, the
first constraint is inequality and not equality as $X_{03}$ counts
also bins with less than three $T$-items, and the second
constraint is not an equality as in some cases $X_{ij}$ counts
bins with fewer than $i$ $F$-items). Using the definitions of
$Y_4$ and $Y_3$ we have $Y_4=X_{90} + X_{50} +X_{81} + X_{41} +
X_{72} + X_{42} + X_{32} +
X_{63}+X_{43}+X_{23}+X_{54}+X_{44}+X_{14}$ and
$Y_3=X_{03}+X_{04}$.

We also have $ALG_1=Y_4+\lceil \frac{M-Y_4}5 \rceil$ while $OPT_1
\leq \frac{M}5 - \frac{4Y_4}{45} +2$, so $$R \geq \frac{A_1}{O_1}
\geq \frac{1/5+4y_4/5}{1/5-4y_4/45} = \frac{9+36y_4}{9-4y_4} \ .
$$ Additionally, $ALG_2=
Y_4+Y_3-X_{50}-X_{41}-X_{32}-X_{23}-X_{03}+\lfloor \frac{M'}3
\rfloor \geq Y_4+Y_3-X_{50}-X_{41}-X_{32}-X_{23}-X_{03}+
\frac{S_3+L_3}3 -2$ while $OPT_2 \leq \frac M9+ \frac{7S_3}{27}+
\frac{7L_3}{27}+3 $, and $ALG_3=Y_4+Y_3-X_{50}+\lfloor \frac{S_3}3
\rfloor \geq Y_4+Y_3-X_{50}+\frac{S_3}3-1 $ while $OPT_3 \leq
\frac{S_3}3 + \frac{L_3}4+2$, so $$R \geq \frac{A_2}{O_2} \geq
\frac{y_4+y_3-x_{50}-x_{41}-x_{32}-x_{23}-x_{03}+ s_3/3 +
\ell_3/3}{7s_3/27+ 7\ell_3/27+1/9}$$ and $R \geq \frac{A_3}{O_3}
\geq \frac{y_4+y_3-x_{50}+s_3/3}{s_3/3 + \ell_3/4}$.

\medskip

We get the following set of constraints:
\begin{eqnarray}
& 8 s_{3}  + 15 \ell_{3} = 12 \label{eq1a}\\
& y_4=x_{90} + x_{50} +x_{81} + x_{41} + x_{72} + x_{42} + x_{32}
+ x_{63}+x_{43}+x_{23}+x_{54}+x_{44}+x_{14} \\ \label{eq2a}
& y_3=x_{03}+x_{04} \label{eq3a} \\ 
& x_{81} + x_{41}+ 2 x_{72}+ 2 x_{42} + 2 x_{32}+  3 x_{63} + 3
x_{43}+ 3 x_{23}+  4 x_{44}+ 4 x_{54} + 4 x_{14}+ 3 x_{03}+
4x_{04}
 \nonumber \\ & \geq  \ell_3 + s_3 \label{eq5a} \\
& x_{41} + x_{42} +  x_{32}+   x_{43}+  x_{23}+  x_{44}+  x_{14}+
 x_{03}+ x_{04}
= \ell_3 \label{eq6a} \\
& 9 x_{90} + 5 x_{50} + 8 x_{81} + 4 x_{41} + 7 x_{72} + 4 x_{42}
 + 3 x_{32} + 6 x_{63} + 4 x_{43} + 2 x_{23} + 5
x_{54} + 4x_{44} +
x_{14} \nonumber \\ & \geq 1  \label{eq7a} \\
& 9+36y_4 \leq R(9-4y_4)  \label{eq8a} \\
& (y_4+y_3-x_{50}-x_{41}-x_{32}-x_{23}-x_{03}+ s_3/3 + \ell_3/3)
\leq R(7s_3/27+ 7\ell_3/27+1/9)
 \label{eq9a} \\
& y_4+y_3-x_{50}+s_3/3 \leq R(s_3/3 + \ell_3/4)
 \label{eq10a}
\end{eqnarray}

The optimal objective function value of the mathematical program
of minimizing $R$ subject to all these constraints is
approximately $1.751544578513$ (and it is not smaller than this
number). Thus, we have proved $R \geq 1.751544578513$.


\section{Online class constrained bin packing (CLCBP)}\label{sec:clcbp}
In this section we exhibit our approach to proving lower bounds
for the last variant of the bin packing problem which we study
here, by improving the known lower bounds for the cases $t=2$ and
$t=3$ of {\sc CLCBP}. We will prove the following theorem.

\begin{theorem}
The competitive ratios of online algorithms for {\sc CLCBP} with
$t=2$ and $t=3$ are at least $1.717668486$ and at least
$1.80814287$, respectively.
\end{theorem}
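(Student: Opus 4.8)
The plan is to mimic the structure used for {\sc K-O} and {\sc SP}: build fully adaptive inputs via Theorem~\ref{cnstrct}, applied once for each ``batch'' of items whose sizes should be slightly perturbed, and then branch the input into several scenarios according to the final list of large items. Fix a large integer $M$, set $k=10$, and let $N=M$ in each invocation of Theorem~\ref{cnstrct}. For $t=2$ the natural skeleton is: a first batch of $M$ items of sizes slightly above $\frac 13$ with condition $C_1$ = ``the item is not packed as a first item into a bin'' (so items opening new bins are the \emph{large} ones), all sharing one color, say color $1$; then a second adaptive batch of items of size slightly above $\frac 12$ (again split into small/large by the same kind of condition), sharing color $2$; and finally one of several terminating batches — identical items of size $1$, or of size slightly below $\frac 23$, each in a fresh color — chosen so that a bin already holding two colors cannot absorb them and so that a bin holding a large item of the previous batch cannot be completed. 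The colour constraint $t=2$ is what gives the improvement over pure size-based bin packing: a bin that the algorithm has ``wasted'' on two small items of two different colors is frozen, exactly as a near-full bin would be, but the optimal solution can always pair items cleverly because it sees all colors at once. For $t=3$ one adds a third adaptive colour layer (e.g.\ items of size slightly above $\frac 14$ in color $0$, then color $1$ items $>\frac13$, then color $2$ items $>\frac12$) before the terminating batches, so that a bin using three colors is frozen while optimal bins can still be packed three-colors-full in a balanced way.

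The execution steps, in order, are: (1) invoke Theorem~\ref{cnstrct} for each adaptive batch, recording the threshold values $\gamma_j$ that separate small from large items; (2) for each terminating scenario, exhibit an explicit optimal (offline) packing and bound its cost from above by $(1+o(1))\,O_i\,M$ — this is routine pattern-matching of item sizes into bins, using that the optimal solution may choose colors and pairings freely; (3) classify the algorithm's bins after the last adaptive batch into a finite list of types $X_{ij\ldots}$ (by how many items of each color and size-class they hold, and crucially whether they are still ``completable''), introduce normalized variables $x=\frac XM$, and write the item-count equalities/inequalities $\sum(\text{count})\cdot x = (\text{total})$; (4) write, for each scenario $i$, the algorithm's cost $ALG_i$ as a linear expression in these variables plus the number of fresh bins forced by the terminating batch, set $A_i=\limsup ALG_i/M$, and impose $A_i \le R\cdot O_i$; (5) collect everything into a linear program minimizing $R$ and solve it (numerically, then verify a dual certificate) to get $R\ge 1.717668486$ for $t=2$ and $R\ge 1.80814287$ for $t=3$, ideally by exhibiting an explicit multiplier combination of the constraints as was done for {\sc K-O}.

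The main obstacle I anticipate is twofold. First, choosing the \emph{right} set of size levels and terminating batches so that the LP optimum actually reaches the claimed constants: the freedom to use colors means the adversary has more leverage than in standard bin packing, but the optimal offline cost $O_i$ also shrinks correspondingly, and the two effects must be balanced by trial over several candidate constructions (this is where the $\frac13,\frac12,\frac23$ and, for $t=3$, $\frac14$ thresholds come from). Second, the bin-type bookkeeping for $t=3$ is genuinely large — one must track, for each bin, the multiset of colors present, whether a fourth color would be refused, and the size-class of each item — so enumerating the types $X_{ij\ldots}$ without missing a ``completable'' case, and correctly expressing each $ALG_i$ (which new bins are forced, which old bins are completed by the terminating batch), is the error-prone part. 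As in the previous sections, once the LP is set up correctly the final bound follows either from an LP solver together with a hand-checked dual solution, or from an explicit weighted sum of the constraints; I would present the latter for verifiability, exactly in the style of equation~(\ref{atotal}) and the lines following it.
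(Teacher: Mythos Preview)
Your plan misses the central adversarial idea of the paper's construction, and the scheme you sketch would not reach the stated constants. Putting all items of a batch in a single color (color~$1$ for the $\frac13$-items, color~$2$ for the $\frac12$-items, etc.) makes the color constraint nearly inert: with only two or three colors in play, the constraint $t\in\{2,3\}$ barely bites beyond what the size constraint already imposes, and you would essentially be reproving a standard bin-packing lower bound in the $1.5$--$1.6$ range. The paper's construction goes in the opposite direction: the first batch consists of \emph{tiny} items (negligible size) each carrying a \emph{distinct} color, so that the color constraint alone forces the algorithm to spread them over at least $M/t$ bins. The adversary then \emph{reuses} carefully chosen colors from among these tiny items for the later $\frac13$-items (two $T$-items per reused color) and for the final $>\frac12$ ``matching'' items (one per $T$-item, same color), so that an offline solution can pair a matching item with its same-colored $T$-item and absorb tiny items of that color for free, while the online algorithm --- whose bins of $t$ tiny items are permanently frozen since none of their colors is ever reused --- cannot.

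Two further points of divergence matter. First, there is an additional branch you did not anticipate: after the tiny items, the adversary may instead present ``huge'' items of size $1-\eps$ whose colors match only the \emph{small} tiny items (those not packed first in their bin); this branch alone forces $R\ge 1+(t-1)x$ and disposes of the regime where the algorithm packs few bins to capacity. Second, your size levels are off: for both $t=2$ and $t=3$ the paper uses \emph{tiny} items, then items just above $\frac13$, then items of size $0.6$ or just below $\frac23$ --- no $\frac14$ level appears, and no batch of items just above $\frac12$ is used as an adaptive layer. The difference between $t=2$ and $t=3$ lies entirely in how many tiny items fit per bin and in the (delicate) stopping rule for the number of $T$-items, not in an extra size layer. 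Without the many-distinct-colors tiny layer and the color-reuse pairing, the LP you set up will not have enough tension to push $R$ above $1.71$ or $1.80$.
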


The constructions for $t=2$ and $t=3$ have clear differences, but
the general idea is similar.  The outline of the constructions is
as follows. Start with a large number of tiny items, all of
distinct colors, so every bin of any algorithm will contain at
most $t$ tiny items. Here, the construction is such that the items
packed first into their bins are much larger than other items
(large tiny items will be larger by at least a constant
multiplicative factor than small tiny items, but they are still
very small). One option at this point is to continue with huge
items of sizes close to $1$, all of distinct colors out of the
colors of small tiny items, such that every item of size almost
$1$ can be packed into a bin with $t$ small tiny items in an
offline solution, one of which has the same color as the huge item
packed with it. Note that no large tiny item can be combined with
a huge item, so those items will be packed separately, $t$ items
per bin. The number of huge items is chosen in a way such that the
optimal cost is not increased. Another option to continue the
construction (instead of introducing the huge items) is with items
of sizes slightly above $\frac 13$, where an item packed into a
bin already containing an item of size above $\frac 13$ is smaller
than an item packed into a bin with no such item (but it could
possibly be packed with tiny items). It is ensured that bins of
the algorithm already containing $t$ (tiny) items will not be used
again by the algorithm by never introducing items of their colors
again. The sizes will be $\frac 13$ plus small values, where these
small values are much larger than sizes of tiny items (including
sizes of large tiny items). An interesting feature is that there
will be exactly {\it two} items of sizes slightly above $\frac 13$
with each color which is used for such items, where the idea is to
reuse (as much as possible) colors of tiny items packed by the
algorithm into bins with at most $t-1$ tiny items (where those
tiny items can be large or small), and never reuse colors of tiny
items packed in bins of $t$ items. In some cases (if there are too
few such colors which can be reused), new colors are used as well
for items of sizes slightly above $\frac 13$ (but there are still
two items of sizes just above $\frac 13$ for each color). After
these last items are presented, the final list of items will be
items of sizes above $\frac 12$ whose colors will match exactly
those of items of sizes in $(\frac 13,\frac 12]$ with the goal of
packing such pairs of one color together into bins of offline
solutions. There are two options for the final items. There are
either such items not much larger than $\frac 12$, or there are
items of sizes close to $\frac 23$, such that such an item having
a color of an item of size slightly above $\frac 13$ can be
combined into a bin with that item and with at most $t$  tiny
items coming from bins of the algorithm with at most $t-1$ items
(no matter whether they are small or large, but one of them has to
be of the same color). However, in the case of items of sizes
almost $\frac 23$, only small items of sizes just above $\frac 13$
will be combined with them in good offline solutions while others
are packed in pairs (of the same color whenever possible, and of
different colors otherwise, combining tiny items where possible).

First, we present the parts of the constructions that are
identical for $t=2$ and $t=3$. The condition $C_1$ will be that
the current item is not the first item of its type packed into its
bin, where a type consists of all items of similar size (the two
relevant types are tiny items and items of sizes slightly above
$\frac 13$). Let $M>1$ be a large integer divisible by $6$. The
construction starts with the first type of items, where these
items are called $E$-items or tiny items, consisting of $M$ items
constructed using Theorem \ref{cnstrct}. Let the value of $k$ be
$20$, and the resulting values $a_i$ are smaller than
${20}^{-2^{2M+2}}$. The number of tiny items presented is always
exactly $M$ (so the stopping condition is that there are $M$
items), and the size of the $i$th item is simply $a_i$. Every
$E$-item has its own color that may be reused in future parts of
the construction but not for $E$-items. Let $\eps_1$ and
$\gamma_1$ be such that the size of any $E$-item satisfying $C_1$
(which we call a small $E$-item) is below $\frac{2\eps_1}{20} <
\frac{\eps_1}t$ and the size of any $E$-item not satisfying $C_1$
(which we call a large $E$-item) is above $2\eps_1$ (but smaller
than ${20}^{-2^{2M+2}}$). Let $X_j$ (for $1 \leq j \leq t$) be the
number of bins of the algorithm with $j$ $E$-items. Let $X$ denote
the total number of bins of $E$-items, i.e., $X=\sum_{j=1}^ t
X_j$.

If huge items arrive now, their number is $\lfloor
\frac{M-X}{t} \rfloor$ and their colors are distinct colors out of
colors of small $E$-items. The size of every huge item is
$1-\eps_1$. If $X_t \leq \frac M{2t}$, there are no other
continuations. In all other cases, there are two possible
continuations except for the one with huge items, which was just discussed.

In all other continuations, items of a second type are presented
such that their number is at most $2M$, and they will be called
$T$-items. They are constructed using Theorem \ref{cnstrct} with
$k=10$, so their values of $a_i$ are in $( 10^{- 2^{2M+3}},
10^{-2^{2M+2}})$. We have (by $M \geq 1$) $\frac{10^{-
2^{2M+3}}}{{20}^{-2^{2M+2}}}=\frac{10^{2^{2M+2}}2^{2^{2M+2}}}{10^{2^{2M+3}}}=\frac{2^{2^{2M+2}}}{10^4}
> 6 > t$. The size of the $i$th $T$-item is $\frac 13+a_i$, and
here condition $C_1$ means that the $T$-item is packed by the
algorithm as the second $T$-item of its bin. Let $\eps_2$ and
$\gamma_2$ be such that a $T$-item satisfying $C_1$ (which we call
a small $T$-item) has size smaller than $\frac
13+\frac{\eps_2}{10}$ and a $T$-item not satisfying $C_1$ (which
we call a large $T$-item) has size larger than $\frac 13+\eps_2$.
The number of $T$-items is even, and their colors are such that
there are two $T$-items for each color. These colors are colors of
$E$-items that are not packed in bins of $t$ $E$-items by the
algorithm. As the number of such $E$-items is $M-t\cdot X_t$, if
the number of $T$-items is larger than $2(M-t\cdot X_t)$, new
colors (which were not used for any earlier item) are used (and
for the new colors there are also two $T$-items for each color).
The variables $Z_1$ and $Z_2$ denote the numbers of bins with at
least one $T$-item and with exactly two $T$-items, respectively,
used by the algorithm (so $Z_2 \leq Z_1$). The algorithm may use
bins with at most $(t-1)$ $E$-items to pack $T$ items (but not
bins with $(t)$ $E$-items, as no additional items have colors as
those items).

For $t=2$, the number of $T$-items is $\max\{2X_1,2X_2\}$. Since
$2X_2 \leq M$ and $2X_1 \leq 2M$, the number of $T$-items does not
exceed $2M$. For $t=3$, the stopping condition is defined as
follows. First, present items until at least one of $Z_1 + Z_2 + 6
X_3 \geq 2M - 1$, $3 Z_1 + 4 Z_2 \geq  2M - 7$ holds. Then, if the
second condition holds, stop presenting items. If the first
condition holds (and the second one does not hold), continue
presenting items until $2 Z_1 + 3 Z_2 \geq 6X_3  - 5$ holds and
stop. At this time, if the current number of $T$-items is odd, one
additional item is presented.  Thus, we guarantee that the value
of $Z_1+Z_2$ is an even number. Since the value $X_3$ is already
fixed when $T$-items are presented, we analyze the increase in the
value of each expression when a new $T$-item is presented. If a
new item is packed into a bin with no $T$-item (and it is large),
then the value of $Z_1$ increases by $1$ while the value of $Z_2$
is unchanged. Otherwise (it is small), the value of $Z_2$
increases by $1$ while the value of $Z_1$ is unchanged. Thus, the
value of $Z_1+Z_2$ can increase by at most $1$, while that of $3
Z_1 + 4 Z_2$ can increase by at most $4$, and that of $2 Z_1 + 3
Z_2$ can increase by at most $3$. Thus, there are two cases. If
the first condition that holds is $3 Z_1 + 4 Z_2 \geq 2M - 7$,
when it started to hold, the value of the left hand side was
increased by at most $4$. If another item is presented to make the
number of items even, it could increase by at most $4$ again, so
$3 Z_1 + 4 Z_2 \leq 2M$. If $Z_1 + Z_2 + 6 X_3 \geq 2M - 1$ holds
first (note that the two conditions could potentially start
holding at the same time), then still $Z_1+Z_2 +6X_3 \leq 2M$. If
in the current step it holds that $Z_1 + Z_2 + 6 X_3 \geq 2M - 1$
and  $3 Z_1 + 4 Z_2 \leq  2M - 8$, at that time, $2 Z_1 + 3 Z_2
\leq 6X_3 - 6$ holds (as otherwise, taking the sum of $Z_1 + Z_2 +
6 X_3 \geq 2M - 1$ and $2 Z_1 + 3 Z_2 \geq 6X_3  - 5$ gives $3 Z_1
+ 4 Z_2 \geq 2M - 6 > 2M-7$). Therefore in the case the first
condition holds first while the second one does not, additional
items are presented and finally $2 Z_1 + 3 Z_2 \leq  6X_3$
(counting the last two items). Thus, after all $T$-items have
arrived, it is either the case that $Z_1+Z_2  \leq 3 Z_1 + 4 Z_2
\leq  2M$ or that $Z_1 + Z_2 \leq 2 Z_1 + 3 Z_2 \leq 6X_3 \leq 2M$
(as $3X_3 \leq M$), so there are indeed at most $(2M)$ $T$-items.

A {\em matching item} for a $T$-item is an item of size above
$\frac 12$ with the same color. There are two continuations as
follows. In the first one, there are items of sizes $0.6$, such
that there is a matching item for every $T$-item (a different
matching item for every item, i.e., $Z_1+Z_2$ items of size $0.6$
in total). In the second one, there are items of sizes $\frac 23
-\frac{\eps_2}{5}$, such that every small $T$-item has a matching
item (once again, a different matching item for every item, i.e.,
$Z_2$ items in total). This concludes the description of our lower
bounds constructions for the two cases of $t=2$ and $t=3$.

\subsection{The analysis}
Let $ALG_i$ and $OPT_i$ respectively denote the costs of the
algorithm and of an optimal solution for the $i$th continuation.
We use $alg_i=\frac{ALG_i}{M}$ and $opt_i=\frac{OPT_i}{M}$. This
auxiliary notation will assist us as we would like to find the
bounds for $M$ growing to infinity. The competitive ratio
satisfies $R \geq \limsup_{M \rightarrow
\infty}\frac{alg_i}{opt_i}=\limsup_{M \rightarrow
\infty}\frac{ALG_i}{OPT_i}$. We will also use $x_i=\frac{X_i}M$
and $z_i=\frac{Z_i}M$, for values of $i$ that these variables are
defined, and $x=\frac XM$.

Consider a given online algorithm and an offline solution after
the huge items are presented.

\begin{lemma}\label{lema10}
We have $ALG_1 = X +\lfloor \frac{M-X}{t} \rfloor$ and $OPT_1 \leq
\frac Mt$.
\end{lemma}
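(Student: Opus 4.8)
The plan is to verify the two claimed quantities separately. For the algorithm's cost $ALG_1$: after the $M$ $E$-items are processed, the algorithm occupies exactly $X=\sum_{j=1}^t X_j$ bins, and these bins are never reused by the algorithm once the huge items arrive, since each huge item has size $1-\eps_1$ and cannot be combined with any $E$-item (even a single large $E$-item, of size below $20^{-2^{2M+2}}$, plus $1-\eps_1$ exceeds $1$ only if $\eps_1$ is chosen small enough — so I should point out that $\eps_1$ is taken small enough that $1-\eps_1 + 20^{-2^{2M+2}} > 1$; this is guaranteed by the construction). Actually the cleaner reason is simply that every bin already holding an $E$-item cannot also hold a huge item of size $1-\eps_1$, because even the tiniest positive addition pushes the total above $1$ once $\eps_1$ is chosen below that threshold. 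Hence the algorithm must open a fresh bin for every one of the $\lfloor\frac{M-X}{t}\rfloor$ huge items, giving $ALG_1 = X + \lfloor\frac{M-X}{t}\rfloor$ exactly.

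For the optimal cost $OPT_1 \leq \frac Mt$: I would exhibit an explicit feasible offline packing using at most $\lceil \frac Mt\rceil$ bins (and note that since $t\mid$ is not assumed, I use $\frac Mt$ as the target, rounding as needed — but $M$ is divisible by $6$ and for $t\in\{2,3\}$ we have $t\mid M$, so $\frac Mt$ is an integer). The packing: group the $M$ $E$-items into $\frac Mt$ groups of $t$ items each, placing each group into its own bin; the total size of any such group is at most $t\cdot 20^{-2^{2M+2}} < \frac12$, so there is ample room. Now $\lfloor\frac{M-X}{t}\rfloor$ of the huge items must be placed; each huge item has a color matching some small $E$-item. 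The number of small $E$-items is $M - \sum_j (\text{large }E\text{-items}) \geq M - X$ (each bin contributes at most one large $E$-item), so there are at least $M-X$ small $E$-items, hence at least $\lfloor\frac{M-X}{t}\rfloor$ bins among our $\frac Mt$ bins that contain at least one small $E$-item whose color can be matched to a distinct huge item. Adding a huge item of size $1-\eps_1$ to such a bin keeps the total below $1-\eps_1 + t\cdot 20^{-2^{2M+2}} \leq 1$ (again using that $\eps_1$ dominates the tiny sizes). This yields a feasible packing into exactly $\frac Mt$ bins, so $OPT_1 \leq \frac Mt$.

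The only delicate point — and the one I would state carefully rather than the main obstacle — is bookkeeping the colors: one must check that the $\lfloor\frac{M-X}{t}\rfloor$ huge items can be assigned to $\lfloor\frac{M-X}{t}\rfloor$ \emph{distinct} bins among the $\frac Mt$ $E$-item bins, each matched by color. Since the huge items' colors are chosen (in the construction) to be distinct colors drawn from the colors of small $E$-items, and distinct colors sit in (not necessarily distinct, but at least $\lfloor\frac{M-X}{t}\rfloor$ distinct once we note the small $E$-items number at least $M-X$ and we may, in the offline packing, choose the grouping of $E$-items so that the relevant small $E$-items land in distinct bins) bins, this works. In fact the simplest argument: freely choose the offline grouping of the $M$ $E$-items into $\frac Mt$ bins so that the $\lfloor\frac{M-X}{t}\rfloor$ specific small $E$-items whose colors are used by huge items are spread into $\lfloor\frac{M-X}{t}\rfloor$ different bins (possible since $\lfloor\frac{M-X}{t}\rfloor \leq \frac Mt$); then place each huge item into the bin holding its color-matched small $E$-item. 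No step here is genuinely hard; the whole lemma is a direct construction plus the observation that the algorithm's $E$-item bins are frozen. I would present it in about five lines.
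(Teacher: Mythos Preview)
Your argument has a genuine gap stemming from a misreading of how $\eps_1$ is calibrated relative to the $E$-item sizes. In the construction, small $E$-items have size below $\eps_1/t$ while large $E$-items have size above $2\eps_1$; thus $\eps_1$ sits \emph{between} the two families. A huge item of size $1-\eps_1$ therefore \emph{does} fit with up to $t$ small $E$-items (total below $1$), but does \emph{not} fit with even one large $E$-item (total above $1+\eps_1$). Your ``cleaner reason'' for $ALG_1$---that any positive addition to $1-\eps_1$ exceeds $1$---is false, and your size check for the offline packing, $1-\eps_1 + t\cdot 20^{-2^{2M+2}} \leq 1$, is also false (indeed $20^{-2^{2M+2}}>2\eps_1$).

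The missing ingredient is exactly what the paper uses. For $ALG_1$: every bin the algorithm opens for $E$-items receives a \emph{large} $E$-item as its first item (that item fails condition $C_1$), and it is this large item that blocks any huge item from entering. For $OPT_1$: the offline packing must place only \emph{small} $E$-items in the bins that also receive a huge item. The paper does this directly by packing each huge item with $t$ small $E$-items (one of matching color), and handling the remaining $E$-items---including all large ones---in separate bins of $t$ items each. Your grouping, which spreads the color-matched small $E$-items across distinct bins but does not segregate the large $E$-items away from those bins, can produce infeasible bins. The fix is easy once you recognize the large/small distinction, but as written your proof of both halves is incorrect.
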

\begin{proof}
Every huge item can be packed with $t$ small $E$-items, if one of
them has the same color as the huge item. No huge item can be
packed with a large $E$-item in one bin. Thus, the algorithm has
$\lfloor \frac{M-X}{t} \rfloor$ bins with huge items (one huge
item packed into each such bin), and all of them contain no other
items (as every bin of the algorithm with $E$-items has a large
$E$-item). A possible offline solution has $\lfloor \frac{M-X}{t}
\rfloor$ bins with a huge item and a small $E$-item of the same
color (as the color of the huge item) and $t-1$ other small
$E$-items, and there are $\frac {M-t(\lfloor \frac{M-X}{t}
\rfloor)}t=\frac Mt-\lfloor \frac{M-X}{t} \rfloor$ bins with $t$
$E$-items not packed in the previous set of bins. All $E$-items
are packed, and the total number of bins is $\frac Mt$.
\end{proof}

\begin{lemma}\label{lema11}
We have $R \geq tx+(1-x)$. If $x_t \leq \frac{1}{2t}$, then the
competitive ratio is at least $2-\frac{1}{2t}$.
\end{lemma}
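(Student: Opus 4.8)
The plan is to derive the two claimed bounds directly from Lemma \ref{lema10}. For the first inequality, I would take the first continuation (huge items) and use the ratio $R \geq \limsup_{M\to\infty} ALG_1/OPT_1$. By Lemma \ref{lema10}, $ALG_1 = X + \lfloor (M-X)/t\rfloor$ and $OPT_1 \leq M/t$. Dividing by $M$ and letting $M\to\infty$, the floor term contributes $(1-x)/t$, so $alg_1 \to x + (1-x)/t$ while $opt_1 \leq 1/t$. Hence
\[
R \;\geq\; \frac{x + (1-x)/t}{1/t} \;=\; tx + (1-x),
\]
which is the first claim. The point is simply that dividing by $1/t$ rescales the algorithm's cost; I should be a little careful that $OPT_1 \leq M/t$ gives a \emph{lower} bound on the ratio (since we divide by something no larger than $M/t$), which is exactly the direction we want.

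For the second statement, the hypothesis $x_t \leq \frac{1}{2t}$ means $X_t \leq \frac{M}{2t}$, so by the construction there is only the huge-items continuation available, and the first bound $R \geq tx + (1-x)$ applies. I would then combine this with a lower bound on $x$: every bin of $E$-items holds at most $t$ items, and there are $M$ $E$-items in total, with exactly $tX_t$ of them in bins of $t$ items and the remaining $M - tX_t$ items in bins of at most $t-1$ items; thus $X \geq X_t + \lceil (M - tX_t)/(t-1)\rceil \geq X_t + (M - tX_t)/(t-1)$. Dividing by $M$, $x \geq x_t + (1 - tx_t)/(t-1)$. Since $tx + (1-x) = 1 + (t-1)x$ is increasing in $x$, I substitute this lower bound:
\[
R \;\geq\; 1 + (t-1)\Bigl(x_t + \tfrac{1 - tx_t}{t-1}\Bigr) \;=\; 1 + (t-1)x_t + 1 - t x_t \;=\; 2 - x_t \;\geq\; 2 - \tfrac{1}{2t},
\]
using $x_t \leq \frac{1}{2t}$ in the last step. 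This gives the claimed bound $R \geq 2 - \frac{1}{2t}$.

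The main obstacle — really the only place requiring care — is the counting argument that yields $x \geq x_t + (1-tx_t)/(t-1)$: one must correctly separate the $E$-items in full bins from those in non-full bins and use that non-full bins have at most $t-1$ items each, and then verify that the monotonicity of $1 + (t-1)x$ lets us substitute the bound on $x$ without reversing the inequality. Everything else is routine arithmetic and taking limits as $M \to \infty$ (the floor/ceiling terms vanish in the limit after dividing by $M$).
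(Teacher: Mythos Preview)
Your proof is correct and follows essentially the same approach as the paper: both derive $R \geq (t-1)x + 1$ from Lemma~\ref{lema10} and then bound $x$ from below via $x \geq x_t + (1 - t x_t)/(t-1)$ using that the $M - tX_t$ items in non-full bins occupy at least $(M - tX_t)/(t-1)$ bins. Your final algebraic shortcut (simplifying directly to $2 - x_t$ before applying $x_t \le \tfrac{1}{2t}$) is in fact slightly cleaner than the paper's version, which first computes the explicit lower bound $x \ge \tfrac{2t-1}{2t(t-1)}$ and then substitutes.
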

\begin{proof}
In this case we consider the input without continuations. By Lemma
\ref{lema10} and by letting $M$ grow to infinity, we have $alg_1 =
\frac{t-1}t x +\frac 1t$, $opt_1 \leq \frac 1t$, and $R \geq
(t-1)x+1$. As $X_t \leq \frac{M}{2t}$, at least $M - t X_t \geq
\frac M2$ items are packed in bins containing at most $t-1$ items,
and thus $x-x_t \geq \frac {1 - t x_t}{t-1}$ and $x \geq x_t +
\frac {1 - t x_t}{t-1} = \frac{1- x_t}{t-1} \geq
\frac{1- 1/(2t)}{t-1} = \frac{2t-1}{2t(t-1)} $. 
We get $R \geq \frac{4t-1}{2t}=2 - \frac{1}{2t}$.
\end{proof}

Using the first part of the last lemma, we get $R \geq
2x+(1-x)=x+1=x_1+x_2+1$ for $t=2$, and $R \geq 2x+1$ for $t=3$. As
we prove lower bounds that are lower than $1.75$ for $t=2$ and
lower than $1.8333$ for $t=3$, by the last lemma, it is left to
deal with the case $x_t \geq \frac{1}{2t}$. Note that the
continuation of huge items is still possible for those cases. The
remaining part of the analysis is performed separately for the two
cases.

\paragraph{The case $\boldsymbol{t=2}$.} In this case we
assume $x_2 >\frac 14$ and therefore $x_1 < \frac 12 < 2x_2$. As
the number of $T$-items is $2\max\{X_1,X_2\}$, there are two
$T$-items of any color of an $E$-item packed alone in a bin by the
algorithm just after the $E$-items have arrived.

\begin{lemma}\label{lemtis2}
We have $alg_2 \geq x_2 + z_1+2z_2$, $alg_3 \geq x_2 + z_1+ z_2 $,
$opt_2 \leq z_1+ z_2$, and $opt_3 \leq
\frac{z_1+2z_2+2x_2-\max\{x_1,x_2\}}2$.
\end{lemma}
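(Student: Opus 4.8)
The plan is to establish each of the four inequalities in Lemma~\ref{lemtis2} by exhibiting an explicit packing (for the optimal-cost upper bounds) or by a counting argument on the bins the algorithm is forced to open (for the algorithm-cost lower bounds), using the structural facts already set up: that there are exactly $2\max\{X_1,X_2\}$ $T$-items, that each color of a $T$-item is either a color of an $E$-item packed alone (in a bin with $t-1=1$ $E$-items) or a fresh color, and that bins of the algorithm containing $t=2$ $E$-items are never touched again.

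First I would handle $alg_2$ and $alg_3$. In both continuations, the algorithm has $Z_1$ bins containing at least one $T$-item and $Z_2$ of those contain two; moreover it has $X_2$ bins with two $E$-items which, by construction, receive no later items, so these survive. For the second continuation (items of size $0.6$, one matching item per $T$-item, $Z_1+Z_2$ items total): each such item has size $>\frac12$, so two of them cannot share a bin; a bin already holding two $T$-items cannot take a third item of size $>\frac12$ either (total $>\frac13\cdot2+\frac12>1$), and a bin with $E$-items has a large $E$-item whose color never recurs, so the size-$0.6$ items cannot reuse bins with two $E$-items. Hence each of the $Z_1+Z_2$ items of size $0.6$ either goes into a fresh bin or into one of the $Z_1$ bins that hold exactly one $T$-item; at most $Z_1$ of them land in existing $T$-bins, forcing at least $Z_2$ fresh bins, and counting $X_2 + Z_1 + Z_2 + Z_2 = X_2 + Z_1 + 2Z_2$ distinct bins gives $alg_2 \ge x_2 + z_1 + 2z_2$. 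For the third continuation ($Z_2$ items of size $\frac23-\frac{\eps_2}5$), a matching item of this size cannot be combined with any large $T$-item (size $>\frac13+\eps_2$, sum $>1$) nor with any bin containing $E$-items of a vanished color nor with a bin holding two $T$-items, so each goes into a fresh bin; together with the $X_2$ untouched two-$E$-item bins and the $Z_1$ bins of $T$-items we get $alg_3 \ge x_2 + z_1 + z_2$.

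Next I would build the offline packings for $opt_2$ and $opt_3$. For $opt_2$: pair up the $Z_1+Z_2$ size-$0.6$ items with the $Z_1+Z_2$ $T$-items sharing their colors (recall $Z_1+Z_2$ is even and there are exactly two $T$-items per color, so this is a perfect matching onto colors), putting one $0.6$-item, its same-color $T$-item, and up to $t=2$ small $E$-items (one of matching color when such a color was reused) into each of $Z_1+Z_2$ bins; this is feasible since $0.6 + (\frac13+\gamma_2) + 2\cdot({\rm tiny}) < 1$. All size-$0.6$ items and all $T$-items are placed, so $opt_2 \le z_1 + z_2$ after the remaining tiny items (and large $E$-items) are absorbed or packed two per bin without raising the count—here one must check, as in Lemma~\ref{lema11}'s spirit, that the number of leftover bins does not exceed $z_1+z_2$; this will follow from $2(Z_1+Z_2) \ge$ number of tiny items of reused colors plus the crude bound on large $E$-items, and is the kind of bookkeeping I would spell out carefully. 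For $opt_3$: only the $Z_2$ small $T$-items get matched with the $Z_2$ items of size $\frac23-\frac{\eps_2}5$ (feasible: $\frac23 + \frac13 + {\rm tiny} < 1$ with the $\eps$-slack, plus two small $E$-items), using $Z_2$ bins; the remaining $T$-items, of which there are $(Z_1+2Z_2) - Z_2 = Z_1 + Z_2$ counted as $T$-items but really $2\max\{X_1,X_2\}$ in total, are packed two per bin (same color when possible), giving $\frac{2\max\{X_1,X_2\} - Z_2}{2}$ further bins; tiny items are absorbed. Collecting terms and writing everything as fractions of $M$ yields $opt_3 \le \frac{z_1 + 2z_2 + 2x_2 - \max\{x_1,x_2\}}{2}$, matching the claim (using that the total $T$-item count is $2\max\{X_1,X_2\}$ and $z_1 + 2z_2$ equals the count of $T$-items as a multiset weighted by occupancy, so the arithmetic collapses to the stated form).

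The main obstacle I anticipate is not any single inequality but the careful accounting of leftover tiny items and large $E$-items in the optimal solutions: one must verify that, after the ``good'' bins are formed, everything else fits into bins already counted, so that the upper bounds $opt_2 \le z_1+z_2$ and the $opt_3$ bound hold with no additive $M$-order slack. This is exactly where the constants in the stopping conditions ($3Z_1 + 4Z_2 \ge 2M-7$, etc.) and the bound $M - tX_t$ on the number of reusable colors enter, and where one leans on the fact that large $E$-items are a vanishing fraction by the estimate $a_i < 20^{-2^{2M+2}}$ only for sizes, not counts—so instead the count of large $E$-items is controlled by $X \le M$ and they can be swept into the $\Theta(1)$ additive terms that disappear as $M \to \infty$. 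Once that bookkeeping is pinned down, the four inequalities follow, and they feed directly into the linear program whose optimum gives $R \ge 1.717668486$.
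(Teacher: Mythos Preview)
Your overall plan is the paper's, but two steps do not go through as written.

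\medskip
\textbf{The count for $alg_2$.} With the paper's (and your own stated) convention that $Z_1$ counts bins with \emph{at least} one $T$-item, only $Z_1-Z_2$ of those bins hold exactly one $T$-item and can accept a $0.6$-item. Hence at least $(Z_1+Z_2)-(Z_1-Z_2)=2Z_2$ of the $0.6$-items land in bins disjoint from both the $X_2$ two-$E$-item bins and the $Z_1$ $T$-item bins, giving $X_2+Z_1+2Z_2$. Your ``at most $Z_1$ land in existing $T$-bins, forcing at least $Z_2$ fresh bins'' is too weak; it yields only $X_2+Z_1+Z_2$, and the extra $Z_2$ in your displayed sum is unaccounted for.

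\medskip
\textbf{The bound for $opt_3$.} Here the gap is substantive. Your packing uses $Z_2$ bins (each small $T$-item with its matching $\tfrac23$-item) plus $\tfrac{Z_1}{2}$ bins for the remaining large $T$-items in pairs, which totals $\tfrac{z_1+2z_2}{2}$, \emph{not} the lemma's $\tfrac{z_1+2z_2+2x_2-\max\{x_1,x_2\}}{2}$. The arithmetic does not ``collapse'' to the stated form; the extra term $x_2-\tfrac12\max\{x_1,x_2\}>0$ is present precisely because the $E$-items are not all absorbed. The obstruction is the color budget: there are $2X_2$ $E$-items whose colors never recur (those the algorithm put in two-$E$-item bins), and with $t=2$ each of your bins can take at most one such foreign-color $E$-item, since the $T$-items already occupy at least one color. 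The $Z_2$ first bins give $Z_2$ foreign-color slots; among the $\tfrac{Z_1}{2}$ large-$T$-pair bins only the monochromatic ones give a slot, and only $\tfrac{Z_1-Z_2}{2}$ same-color large pairs are guaranteed. Thus the number of foreign-color slots is $Z_2+\tfrac{Z_1-Z_2}{2}=\tfrac{Z_1+Z_2}{2}=\max\{X_1,X_2\}<2X_2$. The leftover $2X_2-\max\{X_1,X_2\}$ unique-color $E$-items, together with the $Z_2$ large $T$-items whose partner was small, must be packed two per bin into additional bins; this third batch is exactly where the $2x_2-\max\{x_1,x_2\}$ term comes from in the paper's proof. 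Your final paragraph anticipates a difficulty here but misdiagnoses it: the leftover is $\Theta(M)$, not $\Theta(1)$, and the stopping conditions $3Z_1+4Z_2\ge 2M-7$ you invoke belong to the $t=3$ construction, not to $t=2$ (where the $T$-item count is simply fixed at $2\max\{X_1,X_2\}$).
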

\begin{proof}
The algorithm never reuses bins with two $E$-items as no further
item has color of any of their colors. If the final items have
sizes of $0.6$, the bins with one $T$-item can possibly be reused
(but not those with two such items). The number of final items is
the same as the $T$-items, that is, $Z_1+Z_2$. If the final items
have sizes of $\frac 23 -\frac{\eps_2}{5}$, as any bin with at
least one $T$-item has a large $T$-item, no bins with $T$-items
can be reused by the algorithm. The number of final items is $Z_2$
in this case. The lower bounds on the costs of the algorithm
follow from the numbers of items of sizes above $\frac 12$ in the
final part of the input, and from the property that they cannot be
added to bins with two tiny items, to bins with two $T$-items, and
in the case of items of sizes $\frac 23 -\frac{\eps_2}{5}$ they
cannot be added to any bin with a large $T$-item (in this case
they cannot be added to any bin with at least one $T$-item).

Consider the following offline solutions. If the final items have
sizes of $0.6$, every bin contains a $T$-item and its matching
item of size $0.6$. It also contains an $E$-item of the same
color, if it exists (it is also possible that it exists but it is
packed in another bin with a $T$-item of the same color), and at
most one $E$-item of another color. As $Z_1+Z_2 \geq 2X_1$ and
$Z_1+Z_2 \geq 2X_2$, every $E$-item packed alone in the algorithm
(after all the $E$-items arrive) has a $T$-item of its color
(there are two items with this color, and it can be packed with
one of them). Given the number of bins of this solution, it is
possible to add (at most) one $E$-item, which is packed in bins of
two $E$-items by the algorithm, to each bin containing an item of
size $0.6$ (as the number of such $E$-items is $2X_2$ and the
number of bins is the number of $T$-items, that is, at least
$2X_2$). The total size of items in every bin is below $0.94$.
Thus, those $Z_1+Z_2$ bins are packed in a valid way and contain
all items.

If the final items have sizes of $\frac 23 -\frac{\eps_2}{5}$, as
$E$-items have sizes no larger than $\frac{\eps_2}{60}$, it is
possible to pack one small $T$-item with its matching item of size
$\frac 23 -\frac{\eps_2}{5}$, and at most two $E$-items, one of
which has the same color as the small $T$-item. As there are $Z_2$
small $T$-items, there are at least $Z_1-Z_2$ large $T$-items such
that the other $T$-item of the same color is large, and therefore
there are at least $\frac{Z_1-Z_2}2$ pairs of large $T$-items with
common colors (as $Z_1+Z_2$ is even, $Z_1-Z_2$ is even too). There
are $Z_1-Z_2$ large $T$-items that are packed in pairs, such that
$\frac{Z_1-Z_2}2$ pairs of two large $T$-items of the same color
are packed together with one $E$-item of their color and one
$E$-item of another color (because it cannot contain items of an
additional color). Note that even if there is a larger number of
pairs of large $T$-items with common colors, exactly
$\frac{Z_1-Z_2}2$ pairs are packed in this way. The other large
$T$-items and unpacked $E$-items are simply packed in pairs. Note
that there are $(2X_2)$ $E$-items with unique colors (where no
other item has the same color). We have packed
$Z_2+\frac{Z_1-Z_2}2 =\frac{Z_1+Z_2}2 = \max\{X_1,X_2\} < 2X_2$
items (recall that the number of $T$-items is $2\cdot
\max\{X_1,X_2\}$ and it is also $Z_1+Z_2$, while there are
$(2X_2)$ $E$-items of unique colors and the number of other
$E$-items is $X_1 \leq \max\{X_1,X_2\}$, while the number of
colors of $T$-items is $\max\{X_1,X_2\}$) that are $E$-items with
unique colors, so there are still such items to be packed. There
are $Z_1$ large $T$-items, and therefore $Z_2$ such items remain.
Therefore, as the number of unpacked $E$-items of unique colors is
$2X_2-\frac{Z_1+Z_2}2$, an additional $\lceil
\frac{2X_2-\frac{Z_1+Z_2}2+Z_2}2 \rceil=X_2 - \lfloor
\frac{Z_1-Z_2}4 \rfloor$ bins are used for the pairs. Thus, the
number of bins is at most $Z_2 + \frac{Z_1-Z_2}2 +
X_2-\frac{Z_1-Z_2}4+1=X_2+\frac{Z_1+3Z_2}4+1$. In the case
$Z_1+Z_2=2X_2$, we have $X_2+\frac{Z_1+3Z_2}4=
\frac{Z_1}2+Z_2+\frac{X_2}2$, and in the case $Z_1+Z_2=2X_1$, we
have $X_2+\frac{Z_1+3Z_2}4= \frac {Z_1}2+Z_2+X_2-\frac{X_1}2$. In
both cases the number of bins is at most $\frac
{Z_1}2+Z_2+X_2-\frac{\max\{X_1,X_2\}}2$. The other $E$-items are
packed with $T$-items of their colors.
\end{proof}

Here we solve two mathematical programs, both minimizing $R$ under
constraints including non-negativity constraints for all
variables, and the properties $x_1 \leq 2x_2$, $x_1+x_2 +1 \leq R$, $ x_2 + z_1 + 2 z_2  \leq R(z_1
+ z_2) $, $z_2 \leq z_1$, and $x_1 + 2 x_2 = 1$.

The first program is for the case $x_2 \geq x_1$, which is one of
the constraints (where $z_1+ z_2 = 2x_2$). The other constraints
are  $z_2 +
z_1 - 2 x_2 = 0$, $ x_2 +  z_1 +  z_2 \leq R(z_1/2 +  z_2 + x_2/2
)$. Solving the program shows that $R \geq 1.7320507$ in this case.

The second program is for the case $x_1 \geq x_2$, which is one of
the constraints (and here $z_1+z_2 = 2x_1$). The other
constraints are  $z_2 + z_1 - 2 x_1 = 0$, $ x_2 + z_1 + z_2 \leq
R(z_1/2 + z_2  + x_2 - x_1/2 )$. Solving the program shows that $R \geq 1.717668486$ in this case.

\paragraph{The case $\boldsymbol{t=3}$.} In this case we
assume $x_3 >\frac 16$.

\begin{lemma}\label{lemtis3}
We have $alg_2 \geq x_3 + z_1+2z_2$, $alg_3 \geq x_3 + z_1+ z_2 $,
$opt_2 \leq z_1+ z_2$, and $opt_3 \leq
\frac{z_1+2z_2}2$.
\end{lemma}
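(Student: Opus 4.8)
The plan is to prove Lemma~\ref{lemtis3} by mirroring the structure of the proof of Lemma~\ref{lemtis2} for $t=2$, but adapting every bin-pattern to the capacity $t=3$. The lower bounds $alg_2 \geq x_3 + z_1 + 2z_2$ and $alg_3 \geq x_3 + z_1 + z_2$ on the algorithm's cost should follow exactly as before: the algorithm never reuses a bin with three $E$-items (no future item shares a color with those items), and any bin with at least one $T$-item contains a large $T$-item in the relevant subcase, hence cannot absorb an item of size $\frac 23 - \frac{\eps_2}{5}$; in the $0.6$-continuation only bins with a single $T$-item can be reused. Counting the $Z_1+Z_2$ (resp.\ $Z_2$) matching items of size above $\frac 12$ that must each open a fresh bin, and adding the $X_3$ bins with three $E$-items that are frozen, yields the stated inequalities after dividing by $M$ and taking $M\to\infty$.

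The substance is in the two upper bounds on $OPT$. For the $0.6$-continuation I would exhibit an offline packing with one $T$-item, its matching item of size $0.6$, and up to $t-1 = 2$ $E$-items per bin: one $E$-item of the same color as the $T$-item when it exists, plus possibly one more $E$-item of another color. The total size $\frac 13 + \eps_2 + 0.6 + o(1) < 1$ is fine. Because the number of $T$-items equals $Z_1 + Z_2$ and, by the stopping-condition bookkeeping already established in the construction, this number is at least as large as needed to give every ``reusable'' $E$-item (i.e.\ one not in a bin of $t$ $E$-items) a $T$-item of its color, every such $E$-item can be co-located with a matching $T$-item; the $3X_3$ $E$-items stuck in full bins are the only ones that must be packed separately, three per bin, contributing $X_3$ bins. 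So $OPT_2 \leq Z_1 + Z_2$ (the $X_3$ bins of leftover $E$-items are subsumed once one checks $X_3 \le$ the slack — this is exactly where the inequalities $Z_1 + Z_2 + 6X_3 \le 2M$, $3Z_1 + 4Z_2 \le 2M$ from the construction are used), giving $opt_2 \leq z_1 + z_2$.

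For the $\frac 23 - \frac{\eps_2}{5}$-continuation the offline solution pairs each small $T$-item with its matching item of size $\frac 23 - \frac{\eps_2}{5}$ together with up to $t-1=2$ tiny $E$-items (one of the same color): total size $\frac 13 + \frac{\eps_2}{10} + \frac 23 - \frac{\eps_2}{5} + o(1) < 1$. That uses $Z_2$ bins. The remaining $Z_1$ large $T$-items (and their color-partners, which are also large since small ones were just handled) are packed two per bin; since $Z_1 + Z_2$ is even and every color has exactly two $T$-items, the $Z_1$ leftover large items form $\frac{Z_1}{2}$ pairs, each pair possibly of a common color so it can also swallow $E$-items of that color and one more. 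One then checks all $M$ $E$-items fit into the slack of these bins — using $3X_3 \le M$ and the fact that the reusable $E$-items number $M - 3X_3 \ge$ what is consumed — so no extra bins for $E$-items are needed beyond the $X_3$ full ones, which here are again absorbed. This yields $OPT_3 \leq Z_2 + \frac{Z_1}{2} = \frac{Z_1 + 2Z_2}{2}$, hence $opt_3 \leq \frac{z_1 + 2z_2}{2}$.

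The main obstacle will be the careful accounting of where all $M$ $E$-items go in the two offline solutions: one must verify that the number of $E$-items that can be reused as color-partners of $T$-items (namely $M - 3X_3$, plus new colors introduced only when there are too few of these) is large enough, and that the residual $E$-items fit into the spare capacity ($\le 2$ per bin) of the bins already opened for the large items, so that the only bins dedicated purely to $E$-items are the $X_3$ full ones — and then to confirm that even those $X_3$ bins are dominated by the claimed bounds via the stopping-condition inequalities $Z_1 + Z_2 + 6X_3 \le 2M$ and $3Z_1 + 4Z_2 \le 2M$ derived earlier. Once this bookkeeping is pinned down, dividing by $M$ and passing to the limit gives the four stated relations verbatim.
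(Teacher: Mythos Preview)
Your lower bounds on $alg_2$ and $alg_3$ are fine and match the paper. The gap is in the offline solutions. You allow each bin in the offline packing ``up to $t-1=2$ $E$-items'': one of the $T$-item's color plus one more. But with $t=3$ a bin containing a $T$-item and its matching item (one color) can accommodate \emph{three} $E$-items --- one of that same color and \emph{two} of fresh colors --- and this extra slot for a second unique-color $E$-item is not optional. Concretely, take $X_3=M/3$ (all $E$-items in full triples) and suppose the algorithm packs $T$-items so that $Z_1=Z_2$; the stopping rule $2Z_1+3Z_2\geq 6X_3-5$ then forces $Z_1+Z_2\approx 0.8M$. All $M$ $E$-items have unique colors here, yet your capacity of one unique-color $E$-item per bin gives only $Z_1+Z_2\approx 0.8M<M$ slots, so they do not fit and the $X_3$ extra bins you mention are \emph{not} ``subsumed'' (indeed $x_3>\frac16$ does not vanish). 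The paper uses two unique-color slots per bin, obtaining capacity $2(Z_1+Z_2)$ for the $0.6$-continuation and $Z_1+\tfrac32 Z_2$ for the $\tfrac23-\tfrac{\eps_2}{5}$-continuation, which does suffice.

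Two further points your plan misses. First, the number of unique-color $E$-items is not always $3X_3$: it is $3X_3$ when the first stopping condition $Z_1+Z_2+6X_3\geq 2M-1$ triggers, but $M-\tfrac{Z_1+Z_2}{2}$ when only the second condition $3Z_1+4Z_2\geq 2M-7$ triggers; the paper verifies the capacity bound separately in each case, and the relevant inequalities are the \emph{lower} bounds from the stopping rule ($2Z_1+3Z_2\geq 6X_3-5$ and $3Z_1+4Z_2\geq 2M-7$), not the upper bounds you cite. Second, for $opt_3$ you cannot assume the $Z_1$ leftover large $T$-items form $\tfrac{Z_1}{2}$ same-color pairs: only at least $\tfrac{Z_1-Z_2}{2}$ pairs share a color (and can absorb two unique-color $E$-items), while up to $Z_2$ large $T$-items have a small color-partner already packed and must be paired across colors, leaving room for only one further $E$-item. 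The paper's capacity count $2Z_2+2\cdot\tfrac{Z_1-Z_2}{2}+\lceil\tfrac{Z_2}{2}\rceil\geq Z_1+\tfrac32 Z_2-1$ reflects exactly this distinction.
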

\begin{proof}
The algorithm never reuses bins with three $E$-items as no further
item has any color of their colors. Other than that, the arguments
for the costs of the algorithm are the same as in the case $t=2$.

Next, we analyze offline solutions. In both cases of final items,
the difference with the case $t=2$ is that every bin can contain
two $E$-items whose colors are unique (either because they come
from bins with three $E$-items of the algorithm or because the
number of colors of $T$-items is smaller than the number of items
coming from bins of the algorithm with less than three items). It
is possible to add such items to the bins as the size of three
$E$-items is still below $\frac{\eps_2}{10}$.

We first calculate the number of $E$-items of unique colors (that
is, $E$-items of colors that appear only once for the entire
input).  In the case where $Z_1+Z_2+6X_3 \geq 2M-1$ we have in
fact $Z_1+Z_2+6X_3 \geq 2M$ as the value $Z_1+Z_2$ is even. In
this case every $E$-item packed in a bin with less than three
$E$-items by the algorithm has two $T$-items of its color, and it
can always be packed with one of them. In this case the number of
$E$-items of unique colors is $3X_3$. Otherwise, the number of
$E$-items of unique colors is $M-\frac{Z_1+Z_2}2$, as there are
$(Z_1+Z_2)$ $T$-items, and there are two $T$-items of each color.

We claim that in the case of final items of sizes $\frac 23
-\frac{\eps_2}{5}$, it is possible to pack all $E$-items of unique
colors, possibly except for a constant number of items which can
be packed separately into a constant number of bins. We claim that
there is always space for at least $(Z_1+1.5Z_2-1)$ $E$-items of
unique colors. The difference with the case $t=2$ is that the bins
with the final items can receive two $E$-items of unique colors
and not only one (and there are $Z_2$ such bins). The bins with
pairs of large $T$-items of one color can receive two $E$-items of
unique colors (and there are $\frac{Z_1-Z_2}{2}$ such bins), and
the remaining bins, with two large $T$-items of distinct colors
can receive one such $E$-item (and there are $\lceil \frac{Z_2}{2}
\rceil$ such bins). Thus, it is possible to pack at least $(2Z_2+2
\frac{Z_1-Z_2}2 +\lceil \frac {Z_2}2 \rceil)$ $E$-items of unique
colors. If their number if $3X_3$, we also have $Z_1 +1.5 Z_2 \geq
3X_3 - 2.5$, so excluding a constant number of such items, all of
them are packed. If their number is $M-\frac{Z_1+Z_2}2$, we also
have $3Z_1+4Z_2 \geq 2M-7$, so $M-\frac{Z_1+Z_2}2 \leq Z_1+1.5Z_2
+3.5$. Thus, we find $opt_2 \leq \frac{z_1+2z_2}2$. In the case
where the final items have sizes of $0.6$, it is possible to pack
$(2Z_1+2Z_2)$ $E$-items of unique colors in those bins, and $opt_2
\leq z_1+ z_2$.
\end{proof}

Here we also solve two mathematical programs, both minimizing $R$ under
constraints including non-negativity constraints for all
variables. Other constraints are $x_1 + 2 x_2 + 3 x_3 = 1$, $x=x_1+x_2+x_3$, $1+2x \leq R$, $z_2 \leq z_1$, $x_3 + z_1 + z_2  \leq R(z_1+2z_2)/2$, and
$x_3 + z_1 + 2 z_2 \leq R (z_1+z_2)$.

The first program is for the case where  $Z_1 + Z_2 + 6 X_3 \geq
2M - 1$ and $-5 \leq 2 Z1 + 3 Z2 - 6X3 \leq 0$. These properties
result in the constraints $z_1 + z_2 + 6 x_3 \geq  2$ and $2 z_1 +
3 z_2 -6 x_3 = 0$. Solving the program shows that $R \geq
1.902018$ in this case.

The second program is for the case where $Z_1 + Z_2 + 6 X_3 \leq
2M + 12$ and $2M -7  \leq 3 Z_1 + 4 Z_2 \leq  2M $ hold. Note that
if we stop presenting $T$-items due to the second case where $2M
-7  \leq 3 Z_1 + 4 Z_2$ it means that in the previous
(even-indexed) step the first condition  $Z_1 + Z_2 + 6 X_3 \geq
2M - 1$ did not hold.  Therefore, at that time $Z_1 + Z_2 + 6 X_3
\leq 2M -2$ holds, and the value of the left hand side may
increase by at most $7$ in one step (and thus by at most $14$ in
the last two steps). Those properties result in the constraints
$z_1 + z_2 + 6 x_3 \leq 2$ and $3 z_1 + 4 z_2 = 2$. Solving the
program shows that $R \geq 1.80814287$ in this case.

\section{Summary}
We showed that the method of designing fully adaptive instances,
previously used for cardinality constrained bin packing and vector
packing \cite{BBDEL} (see also \cite{Blitz,BCKK04,FK13}) can be
used to improve the known lower bounds for several additional bin
packing problems. We analyzed its effect (together with many
additional ideas) for several variants, and expect that it could
be useful for a number of other variants as well.

\bibliographystyle{abbrv}

\end{document}